\newtheorem{definition}{Definition}
\newtheorem{strategy}{Strategy}    
\newtheorem{theorem}{Theorem}
\newenvironment{proof}{\begin{IEEEproof}}{\end{IEEEproof}}
\newcommand\MYhyperrefoptions{bookmarks=true,bookmarksnumbered=true,
	pdfpagemode={UseOutlines},plainpages=false,pdfpagelabels=true,
	colorlinks=true,linkcolor={blue},citecolor={blue},urlcolor={blue},
	pdftitle={Utility-driven Mining of Contiguous Sequences},
	pdfsubject={Typesetting},
	pdfauthor={ChunKai Zhang},
	pdfkeywords={Sequence data, utility mining, sequential pattern, contiguous}}
\begin{document}
%

\title{\huge Utility-driven Mining of Contiguous Sequences}

\author{Chunkai Zhang, Quanjian Dai, Zilin Du, Wensheng Gan, Jian Weng, and Philip S. Yu
	
	\IEEEcompsocitemizethanks{\IEEEcompsocthanksitem Chunkai Zhang, Quanjian Dai, and Zilin Du are with Department of Computer Science and Technology, Harbin Institute of Technology (Shenzhen), Shenzhen 518055, China.
		
	\IEEEcompsocthanksitem Wensheng Gan and Jian Weng are with College of Cyber Security, Jinan University, Guangzhou 510632, China. (E-mail: wsgan001@gmail.com) 

	\IEEEcompsocthanksitem Philip S. Yu is with Department of Computer Science, University of Illinois at Chicago, IL, USA.}

}

\IEEEtitleabstractindextext{%
\begin{abstract}

Recently, contiguous sequential pattern mining (CSPM) gained interest as a research topic, due to its varied potential real-world applications, such as web log and biological sequence analysis. To date, studies on the CSPM problem remain in preliminary stages. Existing CSPM algorithms lack the efficiency to satisfy users’ needs and can still be improved in terms of runtime and memory consumption. In addition, existing algorithms were developed to deal with simple sequence data, working with only one event at a time. Complex sequence data, which represent multiple events occurring simultaneously, are also commonly observed in real life. In this paper, we propose a novel algorithm, fast utility-driven contiguous sequential pattern mining (FUCPM), to address the CSPM problem. FUCPM adopts a compact sequence information list and instance chain structures to store the necessary information of the database and candidate patterns. For further efficiency, we develop the global unpromising items pruning and local unpromising items pruning strategies, based on sequence-weighted utilization and item-extension utilization, to reduce the search space. Extensive experiments on real-world and synthetic datasets demonstrate that FUCPM outperforms the state-of-the-art algorithms and is scalable enough to handle complex sequence data.

\end{abstract}

\begin{IEEEkeywords}
Sequence data, utility mining, sequential pattern, contiguous.
\end{IEEEkeywords}}

\maketitle

\IEEEdisplaynontitleabstractindextext

\IEEEpeerreviewmaketitle

\IEEEraisesectionheading{
	\section{Introduction}}
\label{sec:introduction}

\IEEEPARstart{I}{n} the era of big data, large volumes of data are produced every day, including a large amount of sequence data. Sequence data consist of a series of elements (also called \textit{itemsets}) that are arranged in a specific order, such as chronological order (e.g., web access logs, vehicle trajectories) and biological order (e.g., DNA sequences). Each element is a set of one or more events (also called \textit{items}). Sequential pattern mining (SPM) technology \cite{agrawal1995mining, srikant1996mining} extracts the useful underlying knowledge contained in sequence data. SPM's objective is to discover all the frequent sequences from a sequence database, as sequential patterns, where the frequency (also known as \textit{support}) of a sequence is measured by its occurrence times in the database, with respect to a user-defined minimum support threshold. It is well-recognized that SPM facilitates a variety of applications, including key-phrase extraction \cite{wang2017keyphrase}, medical early alarm systems \cite{yu2021multi}, outlying behavior detection \cite{wang2020efficient}, and recommendation systems \cite{le2021sequential, bin2019personalized}. Amazon utilizes SPM to recommend products with a slogan "customer  bought something always then bought this."

SPM assumes that high-frequency sequential patterns are meaningful and interesting. However, such an assumption is impractical in several real-world situations. For example, an SPM-based traffic flow analysis system assesses congestion by mining vehicle trajectory databases. Routes containing heavily used trunk roads are discovered as frequent patterns and are regarded as congested. However, the frequent routes are not necessarily congested because trunk roads often contain more lanes to accommodate more vehicles, while side roads are relatively narrow to form traffic jams. In this case, the route frequency is not a suitable measurement for congestion analysis, while a vehicle's average speed while traveling the roads may be better. A retailer developing bundled sales strategies to maximize revenue provides another example of this technology in use. SPM technology can only tell the retail which commodity patterns are hot selling but cannot provide any information about profit. In general, frequency cannot represent importance in some scenarios, and infrequent sequences may contain crucial information. To address this problem, SPM was generalized to obtain a new study field called high-utility sequential pattern mining (HUSPM) \cite{gan2021survey,ahmed2010novel, yin2012uspan}.

HUSPM incorporates the concept of utility that reveals the relative importance of the items. The mining objectives of HUSPM are quantitative sequence databases, wherein each item is associated with an integer called internal utility, which represents the quantities of an item in an itemset. Moreover, each distinct item has an external utility that measures its significance (e.g., unit profit, interest, and satisfaction). HUSPM aims to find all sequences whose utility is no less than the user-defined minimum utility threshold in a sequence database. The discovered sequences are called high-utility sequential patterns (HUSPs) \cite{yin2012uspan}. Unlike SPM, HUSPM can find more valuable patterns because utility is more related to realistic business needs than frequency \cite{zihayat2016distributed}. In recent years, HUSPM has been a popular topic in knowledge discovery in databases (KDD) \cite{chen1996data} due to its practicality. Some studies \cite{wang2016efficiently, gan2020proum, gan2020fast} were conducted to improve the efficiency of mining HUSPs, while others explored the practical application of HUSPM in various domains, including purchase behavior analysis \cite{gan2018extracting}, web access analysis \cite{ahmed2011framework}, and mobile computing \cite{shie2013efficient}.

Although SPM and HUSPM can excavate useful information from databases, they suffer from a large number of discovered sequential patterns, especially when processing a large-scale database with a low threshold. In addition, some patterns may be meaningless in certain scenarios. For example, when mining the vehicle trajectory database, the sequential patterns that maintain the continuity of items with respect to the original trajectories are more meaningful than those that are not contiguous because the former correspond to real-world routes, whereas the latter jump from place to place \cite{bermingham2020mining}. In the task of DNA sequence analysis, ensuring that the nucleotide arrangement in the patterns is consistent with that of DNA sequences is important because skip nucleotides are difficult to interpret \cite{nawaz2021using}. In summary, the inherent continuous order of items in the sequence data is non-negligible under certain circumstances. This problem introduces a new research task, called contiguous sequential pattern mining (CSPM) \cite{chen2007mining}. CSPM aims to discover contiguous sequential patterns (CSPs), in which the items must maintain the adjacent relation that is defined in the sequence data \cite{zhang2015ccspan}. In other words, a CSP must be a continuous sub-sequence of some sequences in the database. For example, in Figure \ref{map}, a person goes to \textit{St. Patrick’s Cathedral} from the \textit{Grand Central Terminal} through the street sequence $<$\textit{Vanderbilt Ave, E 46th St, Madison Ave, E 48th st, 5th Ave, W 51st St}$>$ that is marked in green, and sequential patterns, such as $<$\textit{Vanderbilt Ave, E 46th St, Madison Ave}$>$ and $<$\textit{Madison Ave, E 48th st, 5th Ave}$>$ are CSPs, while $<$\textit{Vanderbilt Ave, E 48th st, W 51st St}$>$ is not a CSP. In addition, the streets in $<$\textit{Vanderbilt Ave, E 48th st, W 51st St}$>$ are not connected directly to each other; such a pattern may not be useful for applications such as path planning. To date, CSPM has attracted much attention and has been successfully applied in many fields \cite{jawahar2018efficient, yang2018mining}. 

\begin{figure}[htbp]
	\centering
	\includegraphics[width=0.8\linewidth]{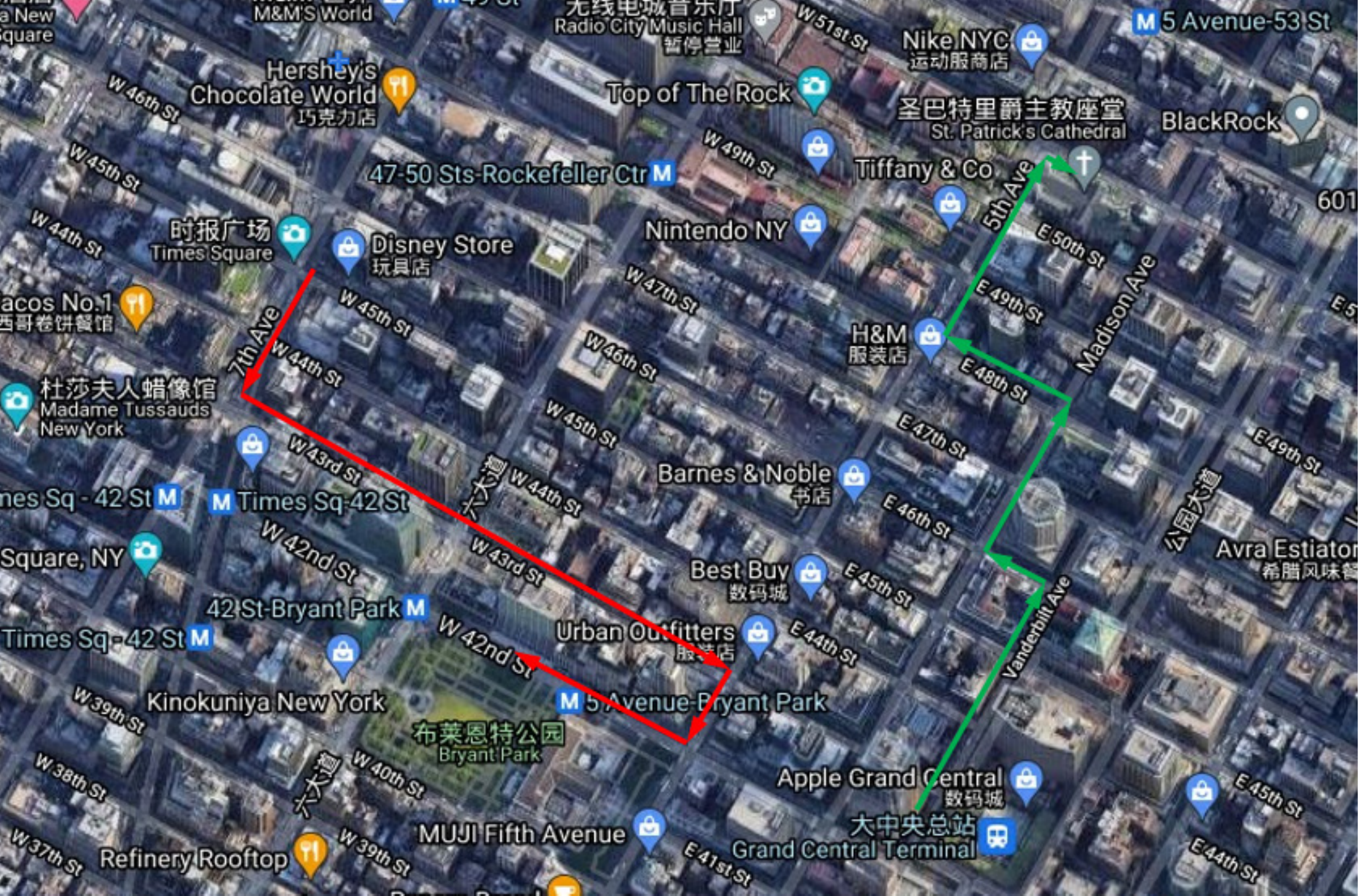}
	\caption{Example of vehicle trajectories}
	\label{map}
\end{figure}

Previous studies on CSPM are mostly based on frequency, while only a few studies \cite{zhou2007utility, ahmed2009efficient, huang2016efficient} have been conducted to address the problem of utility-driven contiguous sequential pattern mining (UCSPM). UCSPM aims to discover high-utility contiguous sequential patterns (HUCSPs) from sequence databases. In fact, UCSPM can be used in a wide range of applications. For example, in \cite{zhou2007utility} and \cite{ahmed2009efficient}, HUCSPs were discovered from users' web browsing sequences, where each web page was regarded as an item and the browsing time was regarded as utility. The discovered patterns can reveal users' browsing preferences and help provide proper navigation suggestions for users. Because HUCSPs maintain the adjacent relationship between the web pages browsed by users, they can also help optimize the website architecture, for example, providing efficient access between highly correlated web pages \cite{ahmed2009efficient}. Other potential applications of UCSPM may include vehicle trajectory analysis \cite{bermingham2020mining} and next-items recommendation \cite{yap2012effective}.

To date, research on UCSPM remains in its early stages. It is noteworthy that almost all existing CSPM and UCSPM methods can only handle simple sequences whose itemsets contain one single item \cite{huang2016efficient}. This type of sequence is called single-item-based sequences. In the age of big data, the volume and complexity of data grow rapidly. Complex sequence data, whose itemsets may contain multiple items (also called multi-items-based sequences), are also commonly observed in real-life scenarios where multiple events occur simultaneously. Considering web browsing sequences, a user commonly browses multiple e-commerce websites to compare price; thus, each itemset in the browsing sequence can have several items. Another noticeable problem is that existing algorithms are not efficient enough to satisfy the requirements of users to quickly discover useful information in large-scale databases. 

In general, several challenges remain in this field, which can be described as follows. First, existing UCSPM approaches \cite{zhou2007utility , ahmed2009efficient, huang2016efficient} only consider single-item-based sequence data but none of them can handle multi-items-based sequence data. Therefore, developing an efficient algorithm that is scalable to large-scale and multi-items-based sequences is an urgent need. Second, UCSPM can easily encounter the combinatorial explosion of the search space. This is because the inherent ordering of itemsets in sequence data generates various compositions of candidate patterns. In particular, when handling complex multi-items-based sequences, the number of candidates can be extremely large. The integration of the contiguous constrain to design powerful pruning strategies is crucial.  Third, the downward closure property \cite{agrawal1994fast} of frequency does not hold for utility; that is, no exact relationship exists between the utility of a sequence and that of its sub-sequences or super-sequences. Therefore, the pruning strategies of frequency-based SPM cannot be directly applied to UCSPM. Fourth, to identify HUCSPs, the utility values and positions of the candidate patterns must be stored during the mining process. Because numerous candidates exist, designing compact and accessible data structures for storing this rich information is important. 

Motivated by these challenges, in this paper, we propose a novel algorithm called FUCPM to mine HUCSPs more efficiently. The major contributions of our study are summarized as follows:

\begin{itemize}
	\item We formalized the problem of UCSPM and then proposed an efficient and scalable algorithm called FUCPM. Unlike the previous UCSPM algorithm, FUCPM can further handle multi-items-based sequence data.
	
	\item We designed two compact data structures called sequence information list (SIL) and instance chain to facilitate the utility calculation of candidate patterns.
	
	\item We proposed a novel utility upper bound called item-extension utilization (\textit{IEU}) and two search space pruning strategies: global unpromising items pruning (GUIP) and local unpromising items pruning (LUIP).
	
	\item After conducting substantial experiments on real-world and synthetic datasets, FUCPM shows its superiority in terms of runtime, memory consumption, and unpromising candidates pruning compared to the state-of-the-art algorithms. In addition, it demonstrates high scalability in handling large-scale sequence datasets (either single-item-based or multi-items-based).
	
\end{itemize}

The remainder of this paper is organized as follows. Section \ref{sec:relatedwork} reviews related works. Section \ref{sec:preliminaries} presents the basic preliminaries and problem statements. The proposed FUCPM algorithm with several data structures and pruning strategies is discussed in Section \ref{sec:method}. The experimental results and analysis are presented in Section \ref{sec:experiments}. Finally, Section \ref{sec:conclusion} concludes the paper and provides prospects for future work.

\section{Related work}  \label{sec:relatedwork}

In this section, we review prior works from the fields of SPM, HUSPM, CSPM, and UCSPM.

\subsection{SPM and HUSPM} \label{sec:SPM} 

Frequent itemset mining (FIM) \cite{agrawal1994fast} aims to discover frequent itemsets from transaction databases. Compared to FIM, SPM additionally considers the sequential ordering of items in the sequences; thus, it is a more complicated task. In general, SPM algorithms can be categorized into three types: breadth-first, depth-first, and pattern-growth algorithms \cite{fournier2017surveys}. Inspired by the Apriori algorithm \cite{agrawal1994fast} for FIM, Agrawal \textit{et al}. proposed the first SPM algorithm called AprioriAll \cite{agrawal1995mining} and GSP \cite{srikant1996mining}. Both AprioriAll and GSP are breadth-first; that is, they generate $k$-sequences (i.e., sequences with $k$ items) based on ($k$-1)-sequences and scan the database repeatedly to calculate the supports of all $k$-sequences. The algorithms do not terminate until no candidate sequences can be generated. The breadth-first algorithms suffer from multiple database scans, which are time-consuming. To address this problem, a depth-first algorithm, SPADE \cite{zaki2001spade}, is proposed. By utilizing a vertical database representation called IDList, which facilitates the calculation of support, SPADE avoids scanning the original database repeatedly. Due to the high efficiency of IDList, it has been used in several subsequent depth-first algorithms (e.g., SPAM \cite{ayres2002sequential}, CM-SPAM \cite{fournier2014fast}, and CM-SPADE \cite{fournier2014fast}). Although depth-first algorithms are more efficient, they generate many candidate patterns that do not exist in the database. To reduce the search space, Pei \textit{et al}. \cite{pei2004mining} proposed a pattern-growth algorithm called PrefixSpan, which only considers candidate patterns appearing in the database. PrefixSpan introduced an important concept, called the projected database, to reduce the cost of database scans. More detailed overviews of SPM can be found in \cite{fournier2017surveys, gan2019survey} .

Frequency-based SPM has its limitations because frequency cannot reflect importance under many circumstances. To address this problem, Ahmed \textit{et al}. \cite{ahmed2010novel} first incorporated the concept of utility into SPM and proposed two different algorithms called utility-level (UL) and utility-span (US), and both of them adopt an upper bound on utility called sequence-weighted utilization (\textit{SWU}) to prune the search space. UL and US have two main disadvantages. First, \textit{SWU} is quite loose, so the algorithms still generate numerous candidates. Second, it costs much time to calculate the exact utility of candidates because multiple database scans are required. In view of this, many later HUSPM algorithms \cite{yin2012uspan, alkan2015crom, wang2016efficiently, gan2020proum, gan2020fast} have focused on proposing tighter upper bounds to further prune the search space and design efficient data structures to facilitate the calculation of utility and upper bound value. Among them, USpan \cite{yin2012uspan} adopts the utility matrix to store the utility information of quantitative sequences. Two upper bounds (\textit{SWU} and sequence-projected utilization (\textit{SPU})) are utilized to eliminate the candidates. However, USpan cannot mine the complete set of HUSPs because \textit{SPU} can sometimes be less than utility, resulting in some HUSPs being filtered out. Based on the framework of USpan, Alkan \textit{et al}.  \cite{alkan2015crom} proposed a complete algorithm called HuspExt. The HuspExt algorithm uses a tight upper bound called Cumulate Rest of Match to reduce unpromising candidates. HUS-Span \cite{wang2016efficiently} is another famous HUSPM algorithm that utilizes two novel upper bounds (i.e., prefix extension utility and reduced sequence utility). In HUS-Span, a projected database called utility chain is designed to store the utility information of candidates. Recently, Gan \textit{et al}. proposed ProUM \cite{gan2020proum} and HUSP-ULL \cite{gan2020fast}, which use efficient data structures called utility array and utility-linked-list, respectively. A more comprehensive overview of HUSPM can be found in the latest literature reviews \cite{truong2019survey, gan2021survey}.

\subsection{CSPM and UCSPM} \label{sec:CSPM}

The methods mentioned in subsection \ref{sec:SPM} aim to find the complete set of sequential patterns; however, not all discovered patterns are useful under certain circumstances. In addition, the volume of mining results can be quite large, making it difficult for users to analyze and utilize them. To address this problem, several constraints (such as contiguous \cite{chen2007mining}, top-\textit{k} \cite{zhang2021tkus}, and closed \cite{ceci2021closed} constraints) have been proposed to obtain a concise and more meaningful subset of the complete set of sequential patterns. Among them, the contiguous constraint is significant when the underlying ordering of items in the sequence data must be emphasized.

Pan \textit{et al}. \cite{pan2005efficient} are among the earliest researchers to investigate the problem of CSPM. They developed two algorithms called MacosFSpan and MacosVSpan, which are inspired by the PrefixSpan algorithm, to mine frequent concatenate sequences (i.e., CSPs) from biological datasets. Kang \textit{et al}. \cite{kang2007mining} proposed the fixed-length spanning tree structure that is used to store the fixed-length contiguous sub-sequences and their supports of given sequence data. The formal definition of CSP was provided by Chen \textit{et al}. \cite{chen2007mining}. With the goal of discovering CSPs from web access logs, they designed the UpDown tree to store sequences that contain a given item, which benefits the mining process. Chen \textit{et al}. \cite{chen2009two} latter proposed an improved version of the UpDown tree, which can handle the sequence data where each itemset may contain more than one item. Currently, CSPM remains a research hotspot. Several studies \cite{zhang2015ccspan, abboud2017ccpm} combined closed and contiguous constraints for SPM with the aim of obtaining a more concise pattern set without loss of information. Other recent studies applied CSPM to a wide range of realistic applications, such as vehicle trajectory analysis \cite{bermingham2020mining, yang2018mining}, protocol specification extraction \cite{goo2019protocol}, and biological sequence analysis \cite{zhang2015mining, nawaz2021using}.

Frequency-based CSPM has been extensively studied; however, very few studies focused on utility-driven CSPM. Zhou \textit{et al}. \cite{zhou2007utility} first introduced utility into CSPM and developed a two-phase algorithm to mine web path traversal patterns, which are HUCSPs actually. This algorithm is similar to the UL algorithm mentioned previously; both generate candidate patterns in a breadth-first fashion and calculate the \textit{SWU} of these candidates in the first phase. In the second phase, the algorithm scans the database repeatedly to calculate the exact utility value of candidates with high \textit{SWU}. An improved algorithm for web path traversal pattern mining, called EUWPTM \cite{ahmed2009efficient}, was proposed later. To reduce a large number of candidates, EUWPTM uses a pattern-growth mechanism to generate candidate patterns. These two algorithms mainly adopt the loose upper bound \textit{SWU} to prune the search space. Huang \textit{et al}. \cite{huang2016efficient} proposed a more efficient algorithm HUCP-Miner with a tight upper bound called remaining utility upper-bound (\textit{RUUB}) that is actually equivalent to the prefix extension utility in HUS-Span. In addition, they designed a data structure called the UL-list to store the utility and position information of candidate patterns. Thus far, HUCP-Miner is the state-of-the-art algorithm for UCSPM. However, \textit{RUUB} is not tight enough to filter out unpromising patterns during the early stage of the mining process. In addition, all the aforementioned algorithms can only handle single-item-based sequences, whereas a large volume of multi-items-based sequences is produced and needs to be analyzed. These problems motivated us to develop a more efficient and scalable algorithm for the UCSPM task.

\section{Preliminaries}   \label{sec:preliminaries}
This section introduces the basic concepts and definitions used in this paper. Then, the formal problem statement of the UCSPM is provided.

\subsection{Concepts and Definitions}
Let $I$ = \{$i_{1}$, $i_{2}$, $\cdots$, $i_{n}$\} be a set of distinct items appearing in the database. An itemset $X$ is a nonempty subset of $I$, that is, $X \subseteq I$. The size of $X$ is defined as the number of items it contains and is denoted by $|X|$. Without loss of generality, the items contained in an itemset are arranged in a lexicographical order hereinafter. A sequence $S$ = $<$$X_{1}$, $X_{2}$, $\cdots$, $X_{m}$$>$ is an ordered list of itemsets, where $X_{k}$ $\subseteq$ $I$ for $1 \leq k \leq m$. We define $|S|$ = $\sum_{k = 1}^{m}|X_{k}|$ as the length of $S$. A sequence is called an $l$-sequence if its length is $l$. For example, given a set $I$ = \{$a, b, c, d, e, f$\}, $X$ = \{$cef$\} is an itemset with a size of three, and $S$ = $<$$\{a\}$, $\{bcf\}$, $\{ab\}$$>$ is a 6-sequence as it contains six items.

\begin{definition}[contiguous sub-sequence and super-sequence]	
	A sequence $S$ = $<$$X_{1}$, $X_{2}$, $\cdots$, $X_{m}$$>$ is a contiguous sub-sequence of another sequence $S'$ = $<$$X'_{1}$, $X'_{2}$, $\cdots$, $X'_{n}$$>$, $m \leq n$, which is denoted by $S \subseteq S'$, if there exists an integer $k$, $1 \leq k \leq n-m+1$, such that $X_{1}$ $\subseteq$ $X'_{k}$, $X_{2}$ $\subseteq$ $X'_{k+1}$, $\cdots$, $X_{m}$ $\subseteq $ $X'_{k+m-1}$. In addition, $S'$ is called a super-sequence of $S$.
\end{definition}

For example, given three sequences $S$ = $<$$\{a\}$, $\{af\}$$>$, $S'$ = $<$$\{e\}$, $\{ab\}$$>$, and $S''$ = $<$$\{c\}$, $\{ab\}$, $\{aef\}$$>$, we say $S$ is a contiguous sub-sequence of $S''$, while $S'$ is not a contiguous sub-sequence of $S''$.

\begin{definition}[quantitative sequence database]
	The database processed by UCSPM is a quantitative sequence database (abbreviated as  $q$-sequence database), which is described as follows: In a $q$-sequence database, each item is called a quantitative item ($q$-item). A $q$-item is represented as a tuple ($i$:$q$), where $i \in I$ and $q$ is the internal utility of item $i$. In addition, each item in $I$ has an external utility, which is listed separately. A quantitative itemset ($q$-itemset) $X$ with $m$ $q$-items is denoted by $X$ = \{($i_{1}$:$q_{1}$) ($i_{2}$:$q_{2}$) $\cdots$ ($i_{m}$:$q_{m})$\}. A $q$-sequence ($q$-sequence) $S$ = $<$$X_{1}$, $X_{2}$, $\cdots$, $X_{n}$$>$ is an ordered list of $n$ $q$-itemsets. Further, a $q$-sequence database is composed of a series of $q$-sequences with a unique identifier \textit{SID}. 
\end{definition}

Table \ref{table1} shows a $q$-sequence database with five $q$-sequences and six distinct items. In this study, we use this $q$-sequence database as the running example. Table \ref{table2} lists the external utility of each item. In $S_{1}$, the first $q$-item ($b$:$2$) represents that item $b$ has an internal utility of $2$. The first itemset of $S_{1}$ (i.e., \{($b$:$2$) ($f$:$4$)\}) is a $q$-itemset containing two $q$-items: ($b$:$2$) and ($f$:$4$). $S_{1}$ = $<$\{($b$:$2$) ($f$:$4$)\}$, $\{($a$:$2$) ($e$:$2$)\}$, $\{($c$:$2$) ($e$:$1$)\}$>$ is a $q$-sequence containing three $q$-itemsets: \{($b$:$2$) ($f$:$4$)\}, \{($a$:$2$) ($e$:$2$)\}, and \{($c$:$2$) ($e$:$1$)\}.

\begin{table}[h]
	\centering
	\caption{Running example of a $q$-sequence database}
	\renewcommand{\arraystretch}{1.25}
	\label{table1}
	\begin{tabular}{|c|c|}  
		\hline 
		\textbf{SID} & \textbf{$q$-sequence} \\
		\hline  
		\({S}_{1}\) & $<$\{($b$:2) ($f$:4)\}$,$ \{($a$:2) ($e$:2)\}$,$ \{($c$:2) ($e$:1)\}$>$ \\ 
		\hline
		\({S}_{2}\) & 
		$<$\{($a$:1)\}$,$ \{($c$:2) ($d$:1)\}$,$ 
		\{($a$:1) ($b$:1) ($e$:2)\}$>$ \\
		\hline 
		\({S}_{3}\) & $<$\{($b$:2) ($f$:2)\}$,$ \{($f$:2)\}$,$ 
		\{($a$:3) ($d$:1)\}$>$ \\
		\hline  
		\({S}_{4}\) & 
		$<$\{($d$:1)\}$,$ 
		\{($b$:4) ($f$:5)\}$,$
		\{($c$:1) ($e$:2)\}$,$
		\{($f$:1)\}$>$ \\
		\hline
		\({S}_{5}\) & 
		$<$\{($a$:2)\}$,$
		\{($a$:1) ($c$:3)\}$,$ \{($c$:1) ($f$:2)\}$,$
		\{($b$:1)\}$>$ \\
		\hline
	\end{tabular}
\end{table}

\begin{table}[h]
	\caption{External utility table}
	\renewcommand{\arraystretch}{1.25}
	\label{table2}
	\centering
	\begin{tabular}{|c|c|c|c|c|c|c|}
		\hline
		\textbf{Item}	    & $a$	& $b$	& $c$	& $d$	& $e$	& $f$ \\ \hline 
		\textbf{External utility}	& 3 & 2& 3 & 2 & 1 & 1\\ \hline
	\end{tabular}
\end{table}

\begin{definition}[matching]
	Given an itemset $X$ = \{$i_{1}$, $i_{2}$, $\cdots$, $i_{m}$\} and a $q$-itemset $Y$ = \{($j_{1}$:$q_{1}$) ($j_{2}$:$q_{2}$) $\cdots$ ($j_{m}$:$q_{m}$)\}, we say that $X$ is the matching of $Y$, denoted by $X \sim Y$, if and only if $i_{k}$ = $j_{k}$ for $1 \leq k \leq m$. Similarly, given a sequence $S$ = $<$$X_{1}$, $X_{2}$, $\cdots$, $X_{n}$$>$, and a $q$-sequence $Q$ = $<$$Y_{1}$, $Y_{2}$, $\cdots$, $Y_{n}$$>$, we say that $S$ is the matching of $Q$, denoted by $S \sim Q$, if and only if $X_{k}\sim Y_{k}$ for $1 \leq k \leq n$. 
\end{definition}

\begin{definition}[instance]
	Given a sequence $S$ = $<$$X_{1}$, $X_{2}$, $\cdots$, $X_{m}$$>$ and a $q$-sequence $Q$ = $<$$Y_{1}$, $Y_{2}$, $\cdots$, $Y_{n}$$>$, where $m \leq n$. If there exists an integer $p$, $m \leq p \leq n$, such that $X'_{k}$ $\sim Y_{p - m + k}$ $\land X_{k} $ $ \subseteq X'_{k} $ for $1 \leq k \leq m$ (i.e., $S$ is the contiguous sub-sequence of the matching of $Q$), we say that $Q$ has an instance of $S$ at \textit{ending position} $p$. It should be noted that $Q$ may have several instances of $S$, which correspond to different ending positions. We denote the set of these ending positions as \textit{EP}($S, Q$). In addition, we say that $Q$ contains $S$ if $Q$ has at least one instance of $S$, which can be denoted by $S' \sim Q$ $\land S $ $\subseteq S'$. In the following, we use $S \sqsubseteq Q$ to represent that $Q$ contains $S$ for convenience.
\end{definition}

For example, itemset $\{bf\}$ is the matching of \{($b$:$2$) ($f$:$4$)\}. Sequence $<$$\{bf\}$, $\{ae\}$, $\{ce\}$$>$ is the matching of $S_{1}$. Given a sequence $S$ = $<$$\{a\}$, $\{c\}$$>$, $S_{5}$ has instances of $S$ at ending positions 2 and 3, respectively. Therefore, \textit{EP}($S,\ S_{5}$) = $\{2, 3\}$. In addition, we say that $S_{5}$ contains $S$ and denote it as $S \sqsubseteq S_{5}$. 

Subsequently, we define the calculation methods for utility values in different situations. Note that the internal utility of item $i$ within the $j$-th $q$-itemset in the $q$-sequence $Q$ is denoted by $q$($i,\ j,\ Q$), and the external utility of $i$ is denoted by $p$($i$).

\begin{definition}[utility calculation]
	Given a $q$-sequence $Q$, the utility of the $q$-item $i$ in the $j$-th $q$-itemset in $Q$, which is denoted by $u$($i, j, Q$), can be calculated as $u$($i, j, Q$) = $q$($i, j, Q$) $\times$ $p$($i$). In addition, the utility of a $q$-itemset or a $q$-sequence is defined as the sum of utility values of elements (i.e., $q$-items and $q$-itemsets, respectively) it contains. Furthermore, we formalize the utility calculation of instances as follows: Given an itemset $X$, a sequence $S$ = $<$$X_{1}$, $X_{2}$, $\cdots$, $X_{m}$$>$, and a $q$-sequence database $D$, the utility of $X$ in the $j$-th $q$-itemset in $Q$ is defined as $u$($X, j, Q$) = $\sum_{i \in X}^{}u$($i,\ j,\ Q$). Assuming that $Q$ has an instance of $S$ at the ending position $p$, the utility of this instance can be calculated as $u$($S,\ p,\ Q$) = $\sum_{j=1}^{m}u$($X_j,\ p - m + j,\ Q$). The utility of $S$ in $Q$ is defined as the maximal utility among all instances of $S$ in $Q$; that is, $u$($S, Q$) = $\max\{u$($S,\ p,\ Q$)$ | \forall p \in$ \textit{EP}($S, Q$)\}. Finally, the utility of $S$ in $D$ is defined as $u$($S$) = $\sum_{Q \in D}^{}u$($S, Q$).
\end{definition}

For example, the utility of ($b$:$2$) within the first $q$-itemset of $S_{1}$ is calculated as $u$($b,\ 1,\ S_{1}$) = $q$($b,\ 1,\ S_{1}$) $\times$ $p$($b$) = 2 $\times$ 2 = 4. Meanwhile, the utility of the first $q$-itemset of $S_{1}$ is $u$($X_{1}, S_{1}$) = 4 + 4 = 8, and that of $S_{1}$ is $u$($S_{1}$) = 8 + 8 + 7 = 23.

In addition, the utility of $\{ab\}$ in the third $q$-itemset of $S_{2}$ is $u$($\{ab\}$, 3, $S_{2}$) = 3 + 2 = 5. Given a sequence $S$ = $<$$\{a\}$, $\{c\}$$>$, $u$($S, 2, S_{5}$) = 6 + 9 = 15 and $u$($S, 3, S_{5}$) = 3 + 3 = 6 can be obtained. Therefore, $u$($S$, $S_{5}$) = max$\{15, 6\}$ = 15. Finally, considering the $q$-sequence database in Table \ref{table1}, we have $u$($S$) = 12 + 9 + 15 = 36.

\subsection{Problem Statement} \label{Problem statement}
\begin{definition}[high-utility contiguous sequential pattern] \label{hucsp}
	In a $q$-sequence database $D$, a sequence $S$ is called a HUCSP, if it is the contiguous sub-sequence of some sequences in $D$ and satisfies that $u$($S$) $\ge$ $\xi $ $\times $ $u(D)$, where $\xi$ is the minimum utility threshold that is given as a percentage. 
\end{definition}

\textbf{Problem Statement} Given a $q$-sequence database $D$, an external utility table and a minimum utility threshold $\xi$, the problem of UCSPM is to identify the complete set of HUCSPs in $D$.

To illustrate the problem of UCSPM clearly, an example is given as follows: Considering the $q$-sequence database given in Table \ref{table1}, we can obtain $u$($D$) = 106. When $\xi$ = 25\%, the minimum utility threshold value is 25\% $\times$ 106 = 26.5, and the discovered HUCSPs are $<$\{$bf$\}$>$ and $<$$\{a\}$, $\{c\}$$>$ with a utility of 27 and 36, respectively. Evidently, $<$\{$bf$\}$>$ is a contiguous sub-sequence of the matching of $S_{1}$, $S_{3}$, $S_{4}$, and $<$$\{a\}$, $\{c\}$$>$ is a contiguous sub-sequence of the matching of $S_{1}$, $S_{2}$, and $S_{5}$.

\section{Proposed Method}   \label{sec:method}

This section describes the proposed FUCPM algorithm for mining HUCSPs in a pattern-growth manner. FUCPM utilizes the SIL and instance chain data structures to avoid scanning the $q$-sequence database repeatedly when calculating the utility of candidate patterns. Two powerful pruning strategies (i.e., GUIP and LUIP) are used to reduce the search space. The details of the above data structures, pruning strategies, and the FUCPM algorithm are provided in the following section. To facilitate the discussion in this section, we first define the following. 

\begin{definition}[extension]
	The extension operation is used to generate new candidate patterns and consists of two types of operations: item-extension (I-extension) and sequence-extension (S-extension). Given a sequence $S$ and an item $i$, the I-extension sequence of $S$, denoted by $<$$ S \bigoplus i $$>$, appends $i$ to the last itemset of $S$. The S-extension sequence of $S$, denoted by $<$$ S \bigotimes i $$>$, adds $i$ to a new itemset and then appends the new itemset to the end of $S$. For brevity, the I-extension sequence and S-extension sequence are collectively referred to as extension sequences hereinafter. Note that the extension sequences of $S$ are a subset of the super-sequences of $S$.
\end{definition}

For example, given a sequence $S$ = $<$$\{a\}$, $\{c\}$$>$, and an item $d$, then $<$$\{a\}$, $\{cd\}$$>$ is the I-extension sequence of $S$, and $<$$\{a\}$, $\{c\}$, $\{d\}$$>$ is the S-extension sequence of $S$.

\begin{definition}[extension item]
	\label{extension item}
	Given a sequence $S$ whose last item is $i$, a $q$-sequence $Q$, and a $q$-sequence database $D$, we assume that $Q$ has $n$ instances of $S$ and the set of ending positions is \textit{EP}($S, Q$) = $\{ep_{1}$, $ep_{2}$, $\cdots$, $ep_{n}\}$. The set of I-extension items of $S$ in $Q$, denoted by \textit{Iitem}($S, Q$), comprises the items that appear in the $ep_{1}$, $ep_{2}$, $\cdots$, $ep_{n}$-th itemset of $Q$ and are lexicographically larger than $i$. The set of S-extension items of $S$ in $Q$, denoted by \textit{Sitem}($S, Q$), is composed of the items appearing in the ($ep_{1}$ + 1), ($ep_{2}$ + 1), $\cdots$, ($ep_{n}$ + 1)-th itemset of $Q$. Furthermore, the set of I-extension/S-extension items of $S$ in $D$ is defined as \textit{Iitem}($S$) = $\bigcup_{{Q \in D}}^{}$\textit{Iitem}($S, Q$) and \textit{Sitem}($S$) = $\bigcup_{{Q \in D}}^{}$\textit{Sitem}($S, Q$), respectively.
\end{definition}	

For example, the sets of I-extension/S-extension items of $<$\{$a$\}$>$ in $S_{2}$ are \textit{Iitem}($<$\{$a$\}$>$, $S_{2}$) = \{$b, e$\}, and \textit{Sitem}($<$\{$a$\}$>$, $S_{2}$) = \{$c, d$\}. Furthermore, given the database shown in \ref{table1}, we have \textit{Iitem}($<$\{$a$\}$>$) = \{$b, c, d, e$\} and \textit{Sitem}($<$\{$a$\}$>$) = \{$a, c, d, e, f$\}.

\begin{definition}[remaining sequence and remaining utility]
	Given a sequence $S$ and a $q$-sequence $Q$, it is assumed that $Q$ has an instance of $S$ at the ending position $p$. The remaining sequence of $Q$ with respect to such an instance is denoted by $Q/_{(S,p)}$ and is defined as a suffix sequence of $Q$, which begins from the item after the last item of such an instance in $Q$ to the end of $Q$. Furthermore, the utility of the remaining sequence is called the remaining utility and is defined as \textit{ru}($Q/_{(S,p)}$) = $\sum_{i \in Q/_{(S,p)}}^{}u$($i$).
\end{definition}

For example, the remaining sequence of the instance of $<$$\{a\}$, $\{c\}$$>$ at ending position 2 in $S_{5}$ is $Q/_{(<\{a\}, \{c\}>, 2)}$ = $<$\{($c$:$1$) ($f$:$2$)\}$,$ \{($b$:$1$)\}$>$, and the corresponding remaining utility is $u(S_{5}/_{(<\{a\}, \{c\}>, 2)})$ = 3 + 2 + 2 = 7. 

\subsection{Data Structures}
As mentioned in the introduction, the main challenge of UCSPM is the existence of numerous candidate patterns. In addition, each candidate pattern may appear multiple times in a $q$-sequence. Therefore, to calculate the utility of a candidate pattern, the whole $q$-sequence database should be scaned to find all its instances in each $q$-sequence. Clearly, the scanning process requires a long execution time. To address this problem, we proposed using SIL to represent the original database and using instance-chain (IChain) to store the utility and positions of instances of a candidate pattern. The details of SIL and IChain are described below.

SIL stores information of the $q$-sequence, including the utility of each $q$-item and the remaining utility related to each $q$-item. Table \ref{SIL} shows SIL of $S_{1}$ in Table \ref{table1}. In SIL of $S_{1}$, each curly bracket ($\{\}$) corresponds to a $q$-itemset. The element ($b,$ 4$,$ 19) within the first curly bracket indicates that the utility of item $b$ is four, and the remaining utility of $S_{1}$ with respect to this item is 19. The storage of utility and remaining utility in the SIL benefits the calculation of utility and upper bound value of candidate patterns, which will be discussed later. 

IChain stores the utility and ending positions of all instances of a candidate pattern. To facilitate the following description, the $q$-sequence $Q$ is assumed to have $n$ instances of candidate pattern $S$. The set of these ending positions is \textit{EP}($S, Q$) = $\{ep_{1}, ep_{2}, \cdots, ep_{n}\}$. IChain comprises several instance lists, each of which corresponds to a $q$-sequence containing the candidate pattern. For the example of $S$ and $Q$, the instance list contains the SID of $Q$, as well as a list of $n$ elements. The $i$-th element contains two fields: (1) \textit{EPos}, which is the ending position of the $i$-th instance of $S$ in $Q$ (i.e., $ep_{i}$) and (2) \textit{Utility}, which is the utility of the $i$-th instance of $S$ in $Q$. In the following, we use a tuple (\textit{EPos}, \textit{Utility}) to represent an element in instance list. The IChain of all 1-sequences can be constructed by scanning the SIL once. Figure \ref{chainofa} shows the IChain of sequence $<$$\{a\}$$>$ in the $q$-sequence database given in Table \ref{table1}.

\begin{table}[h]
	\caption{Sequence information list of $S_{1}$}
	\renewcommand{\arraystretch}{1.25}
	\newcommand{\tabincell}[2]{\begin{tabular}{@{}#1@{}}#2\end{tabular}} %
	\label{SIL}
	\centering
	\begin{tabular}{|c|c|}
		\hline
		\textbf{SID}	    & Sequence information list	\\ \hline 
		\textbf{$S_{1}$}	& \tabincell{c}{$<$\{($b,$ 4$,$ 19) ($f,$ 4$,$ 15)\}$,$ \{($a,$ 6$,$ 9) ($e,$ 2$,$ 7)\}$,$ \\\{($c,$ 6$,$ 1) ($e,$ 1$,$ 0)\}$>$} \\ \hline
	\end{tabular}
\end{table}

\begin{figure}[h]
	\centering
	\includegraphics[width=0.5\linewidth]{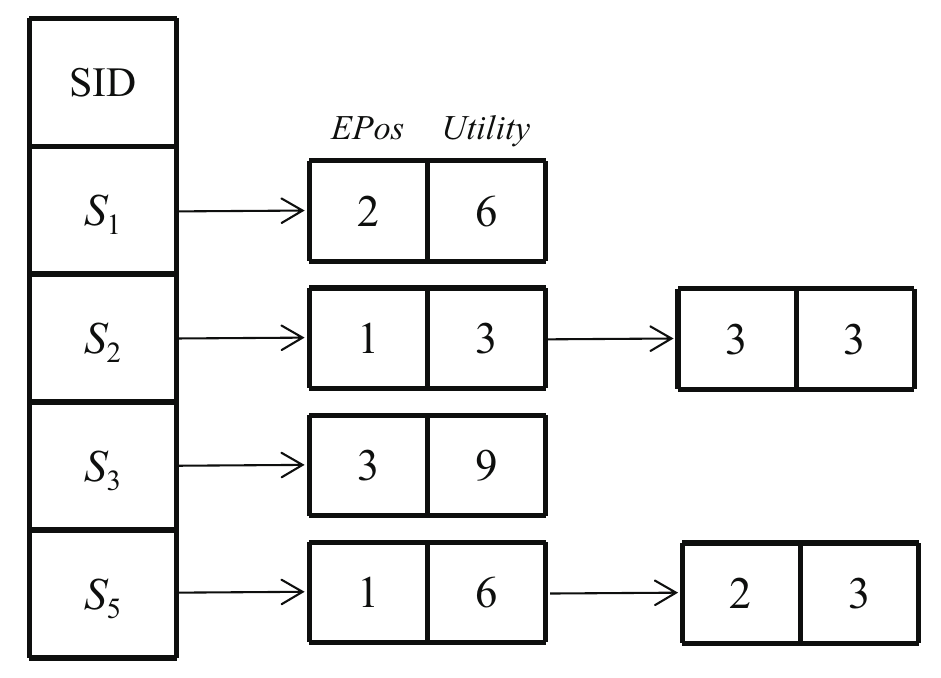}
	\caption{The IChain of $<$${a}$$>$}
	\label{chainofa}
\end{figure}

The IChain for $k$-sequences, where $k > 1$, is constructed recursively based on the IChain of its prefix with a length of $k$-1. Specifically, given sequences $S$ and $S'$ = $<$$ S \bigoplus i' $$>$, to construct the IChain of $S'$, we must scan the IChain of $S$ using the following method. For a certain element (\textit{EPos}, \textit{Utility}) in the instance list of $S$ corresponding to a $q$-sequence $Q$, we check the SIL of $Q$ whether item $i'$ exists in the $\textit{EPos}$-th itemset. If there exists, we construct a new element (\textit{EPos}, \textit{Utility} + $u$($i'$, \textit{EPos}, $Q$)) and insert it into the instance list of $S'$ corresponding to $Q$. After traversing all elements of the IChain of $S$ in the same manner, the IChain of $S'$ can be built. For an S-extension sequence $S''$ = $<$$ S \bigotimes i'' $$>$, the construction method of the IChain of $S''$ is similar to that of $S'$. The only difference is that we check whether $i''$ exists in the (\textit{EPos} + 1)-th itemset of $Q$. If there exists, we construct an element (\textit{EPos} + 1, \textit{Utility} + $u$($i''$, \textit{EPos} + 1, $Q$)) and insert it to the instance list of $S''$. Figure \ref{chainofac} shows the IChain of sequence $<$$\{a\}$, $\{c\}$$>$ in the $q$-sequence database given in Table \ref{table1}. 

\begin{figure}[h]
	\centering
	\includegraphics[width=0.5\linewidth]{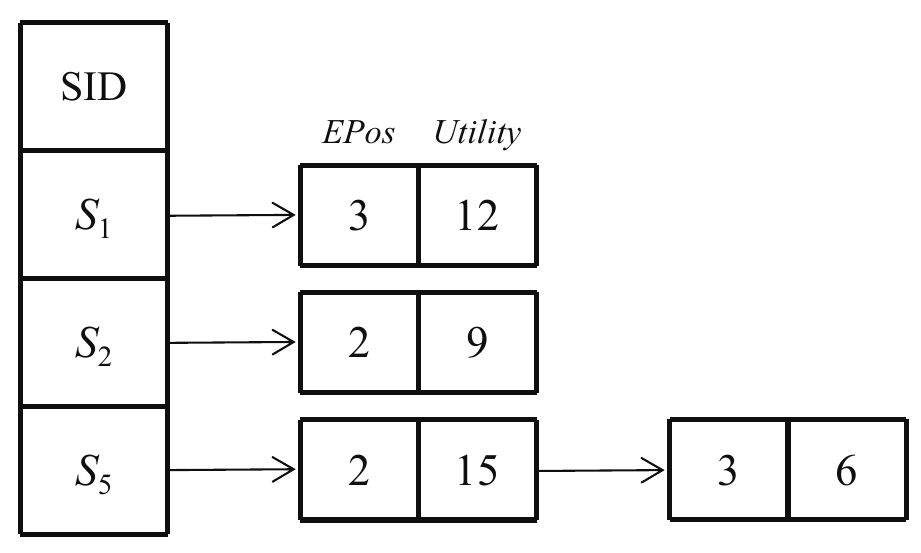}
	\caption{The IChain of $<$$\{a\}$, $\{c\}$$>$}
	\label{chainofac}
\end{figure}

\subsection{Pruning Strategies}
In the SPM and HUSPM tasks, the combinatorial explosion of the search space is a classic problem, which is caused by the sequential ordering of itemsets. Many SPM algorithms adopt the downward closure property to prune the search space. This property states that, if a sequence is infrequent, then all its super-sequences are also infrequent. However, this property only holds for the frequency, but does not hold for utility. To address this problem, researchers have proposed using upper bounds on utility to prune the search space in HUSPM. An upper bound on utility must satisfy the following two properties: (1) overestimate property (i.e., the upper bound value of a sequence overestimates the utility of this sequence) and (2) the downward closure property (i.e., the upper bound value of a sequence is no less than that of its extension sequences). Sequence-weighted utilization (\textit{SWU}) \cite{yin2012uspan} is a commonly used upper bound in HUSPM. Inspired by \textit{SWU}, we develop a GUIP strategy to prune low-utility 1-sequences (also called global unpromising items) in the early stage of mining HUCSPs. Furthermore, we proposed a novel upper bound called item-extension utilization (\textit{IEU}) that considers the contiguous constraint and a corresponding pruning strategy to eliminate local unpromising items for each candidate during the entire mining process. Subsequently, we introduce the details of these two upper bounds and the corresponding pruning strategies. 

\begin{definition}[sequence-weighted utilization]
	Given a sequence $S$ and a $q$-sequence database $D$, the sequence-weighted utilization (\textit{SWU}) of $S$ in $D$ is defined as 
	\textit{SWU}($S$) = $\sum_{S \sqsubseteq Q \land Q \in D }^{}\textit{u}(Q)$.
\end{definition} 

For example, in Table \ref{table1}, \textit{SWU}($a$) =  $u$($S_{1}$) +  $u$($S_{2}$) +  $u$($S_{3}$) +  $u$($S_{5}$) = 23 + 18 + 19 + 25 = 85, and \textit{SWU}($d$) = $u$($S_{2}$) +  $u$($S_{3}$) +  $u$($S_{4}$) = 58.

\begin{theorem}[overestimate property of \textit{SWU}]
	\label{OPSWU}
	Given a sequence $S$ and a $q$-sequence database $D$, it can be obtained that $u(S) \leq \textit{SWU}(S)$.
\end{theorem}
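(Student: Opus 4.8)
The goal is to show that for any sequence $S$ and $q$-sequence database $D$, we have $u(S) \leq \textit{SWU}(S)$. The plan is to compare the two quantities $q$-sequence by $q$-sequence. Recall that $u(S) = \sum_{Q \in D} u(S, Q)$, where by convention $u(S, Q) = 0$ whenever $Q$ does not contain $S$, while $\textit{SWU}(S) = \sum_{S \sqsubseteq Q \land Q \in D} u(Q)$ sums the \emph{full} utility of each $q$-sequence that contains $S$. Since every term in the sum defining $u(S)$ is nonnegative and only the $q$-sequences with $S \sqsubseteq Q$ contribute, it suffices to establish the pointwise inequality $u(S, Q) \leq u(Q)$ for each $Q$ with $S \sqsubseteq Q$; summing over all such $Q$ then yields the result.

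First I would fix a $q$-sequence $Q$ containing $S$ and recall that $u(S, Q) = \max\{u(S, p, Q) \mid p \in \textit{EP}(S, Q)\}$. Let $p^{*}$ be an ending position attaining this maximum, so $u(S, Q) = u(S, p^{*}, Q)$. By Definition of the utility of an instance, $u(S, p^{*}, Q) = \sum_{j=1}^{m} u(X_j, p^{*} - m + j, Q)$, which is exactly the sum of the utilities of the $q$-items that match the instance of $S$ at ending position $p^{*}$. The key observation is that this instance is, by definition, a contiguous sub-sequence of the matching of $Q$, so the $q$-items it covers form a subset of all the $q$-items in $Q$.

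Next I would invoke the fact that all internal utilities and external utilities are positive, so every $q$-item in $Q$ contributes a strictly positive amount to $u(Q)$. Since the utility $u(Q)$ of the whole $q$-sequence is the sum of the utilities of \emph{all} its $q$-items (by the utility-calculation definition), and the instance at $p^{*}$ accounts for only a subset of those $q$-items, we obtain $u(S, p^{*}, Q) \leq u(Q)$. Combining this with the choice of $p^{*}$ gives $u(S, Q) \leq u(Q)$ for each containing $Q$, and summing over all $Q \in D$ with $S \sqsubseteq Q$ completes the argument.

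The main obstacle is not the inequality itself but making the subset-of-$q$-items claim precise: one must argue carefully that the utility of any single instance never double-counts or over-counts $q$-items, so that its total is genuinely bounded by the sum over all $q$-items of $Q$. This relies essentially on the nonnegativity (indeed positivity) of utilities, which guarantees that dropping the $q$-items outside the instance can only decrease the sum. I expect the proof to be short once the pointwise reduction is set up, with the bookkeeping of which $q$-items belong to the chosen instance being the only delicate point.
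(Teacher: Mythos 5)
Your proof is correct and follows essentially the same route as the paper's: establish the pointwise inequality $u(S,Q) \leq u(Q)$ for each $q$-sequence $Q$ containing $S$, then sum over all such $Q$ to conclude $u(S) \leq \textit{SWU}(S)$. The only difference is that you spell out why $u(S,Q) \leq u(Q)$ holds (the maximizing instance covers a subset of $Q$'s $q$-items, all with nonnegative utility), a step the paper simply asserts.
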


\begin{proof}
	\label{proof_OPSWU}
	For any $q$-sequence $Q$, we can obtain that \textit{u}($S, Q$) $\leq$ $u$($Q$). Therefore, $u$($S$) = $\sum_{S \sqsubseteq Q \land Q \in D}^{}u$($S, Q$) $\leq \sum_{S \sqsubseteq Q\land Q \in D}^{}u$($Q$) = \textit{SWU}($S$). 
\end{proof}

\begin{theorem}[downward closure property of \textit{SWU}]
	\label{DCPSWU}
	Given two sequences $S$ and $S'$, and a $q$-sequence database $D$, if $S'$ is a super-sequence of $S$ (i.e., $S \subseteq S'$), then \textit{SWU}($S'$) $\leq$ \textit{SWU}($S$).
\end{theorem}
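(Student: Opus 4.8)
The plan is to reduce the stated inequality to a set-inclusion statement about which $q$-sequences contribute to each \textit{SWU}, and then to invoke the non-negativity of utilities. First I would unfold the definition for both sequences: $\textit{SWU}(S') = \sum_{S' \sqsubseteq Q \land Q \in D} u(Q)$ and $\textit{SWU}(S) = \sum_{S \sqsubseteq Q \land Q \in D} u(Q)$. Both sums range over subsets of $D$ singled out by the containment relation $\sqsubseteq$, differing only in their index sets, so it suffices to compare these two index sets term by term.

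The key step is to show that the family of $q$-sequences containing $S'$ is included in the family of those containing $S$; that is, $S' \sqsubseteq Q \Rightarrow S \sqsubseteq Q$ for every $Q \in D$. By Definition 4, $S' \sqsubseteq Q$ means $S'$ is a contiguous sub-sequence of the matching of $Q$. Since $S \subseteq S'$ by hypothesis, I would establish that the contiguous sub-sequence relation composes: if $S$ embeds into a consecutive window of $S'$ with itemset-wise inclusion (per Definition 1), and $S'$ in turn embeds into a consecutive window of the matching of $Q$, then adding the two window offsets and using the transitivity of ordinary itemset inclusion shows that $S$ embeds into a consecutive window of the matching of $Q$. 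Hence $S \sqsubseteq Q$, which yields $\{Q \in D : S' \sqsubseteq Q\} \subseteq \{Q \in D : S \sqsubseteq Q\}$.

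Finally, because every internal utility is a non-negative quantity and every external utility is positive, we have $u(Q) \ge 0$ for all $Q \in D$. Summing $u(Q)$ over the smaller index set can therefore only decrease or preserve the total relative to summing over the larger set, since the extra terms indexed by $\{Q : S \sqsubseteq Q\} \setminus \{Q : S' \sqsubseteq Q\}$ are all non-negative. This immediately gives $\textit{SWU}(S') \le \textit{SWU}(S)$.

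I expect the main obstacle to be the transitivity argument for the contiguous sub-sequence relation. Unlike ordinary (gapped) subsequence containment, the contiguous version forces the matched itemsets to occupy a single consecutive block, so I must verify that composing two such block-embeddings again produces a single consecutive block rather than a gapped one. This holds because the offset $k$ in Definition 1 merely shifts one contiguous window onto another, and a composition of shifts is again a shift; the only care needed is the index bookkeeping, so that the composed window stays within the length bounds of the matching of $Q$.
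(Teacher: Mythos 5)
Your proof is correct and takes essentially the same route as the paper's: both reduce the inequality to the index-set inclusion $\{Q \in D : S' \sqsubseteq Q\} \subseteq \{Q \in D : S \sqsubseteq Q\}$ and then compare the two sums of non-negative sequence utilities. The only difference is one of detail: you explicitly verify the transitivity of the contiguous sub-sequence relation (composition of window offsets) and the non-negativity of $u(Q)$, both of which the paper's one-line proof asserts implicitly.
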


\begin{proof}	
	\label{proof_DCPSWU}
	Because $S \subseteq S'$, we have $\{Q|S'\sqsubseteq Q \land Q \in D\}$ $\subseteq$ $\{Q|S\sqsubseteq Q \land Q \in D\}$. Therefore, we can obtain that \textit{SWU}($S'$) = $\sum_{S' \sqsubseteq Q \land Q \in D }^{}\textit{u}$($Q$) $\leq \sum_{S \sqsubseteq Q \land Q \in D}^{}\textit{u}$($Q$) = \textit{SWU}($S$).
\end{proof}

Some existing HUSPM algorithms \cite{gan2020proum, gan2020fast} utilize \textit{SWU} to prune low-utility 1-sequence. That is, if the \textit{SWU} value of a 1-sequence is less than the minimum utility threshold, then this 1-sequence and all its super-sequences cannot be HUSPs. In other words, the unique item in the low-utility 1-sequence is a global unpromising item, and the sequences containing this item cannot be HUSPs. Numerous unpromising candidate patterns can be pruned using the \textit{SWU}. However, existing algorithms only use the strategy once, which cannot eliminate global unpromising items thoroughly. Hence, we propose a more effective pruning strategy called GUIP.

\begin{strategy}[GUIP strategy]
	The GUIP strategy is a recurrent process: (1) for each item $i$ in the $q$-sequence database, if \textit{SWU}($i$) $< \xi$ $\times $ $u$($D$), then remove $i$ from the database; (2) update the utility of each $q$-sequence and the SWU of the remaining items; (3) go to (1) until, for any remaining item $i$, \textit{SWU}($i$) $\ge$ $\xi $ $\times$ $u$($D$).
\end{strategy} 

In addition to the GUIP strategy, we further propose the \textit{IEU} upper bound and the corresponding LUIP strategy to prune the local unpromising items of each candidate pattern.

\begin{definition}[item-extension utilization]
	Given a sequence $\alpha$ and a $q$-sequence $Q$ containing $\alpha$, assume that $S$ is the extension sequence of $\alpha$ where the extension item is $i$. The $p$-th itemset of $Q$ is denoted by $Q^{p}$.
	
	i) For I-extension (i.e., $S$ = $\alpha \bigoplus i$), the \textit{IEU} of $S$ in $Q$ with respect to the ending position $p$, $p \in$  \textit{EP}($\alpha, Q$), is defined as
	\[\textit{IEU}(S, p, Q) = \textit{u}(\alpha, p, Q) + \textit{u}(i, p, Q) + \textit{ru}(Q/_{(i,p)}),\]
	if and only if $i \in Q^{p}$; otherwise, \textit{IEU}($S, p, Q$) = 0.
	
	ii) For S-extension (i.e., $S$ = $\alpha \bigotimes i$), the \textit{IEU} of $S$ in $Q$ with respect to the ending position $p$, $p \in  \textit{EP}$($\alpha, Q$), is defined as
	\[\textit{IEU}(S, p, Q) = \textit{u}(\alpha, p, Q) + \textit{u}(i, p+1, Q) + \textit{ru}(Q/_{(i,p+1)}),\]
	if and only if $i \in Q^{p+1}$; otherwise, \textit{IEU}($S, p, Q$) = 0.
	
	iii) For both I-extension and S-extension, the \textit{IEU} of $S$ in $Q$ is defined as 
	\[\textit{IEU}(S, Q) = \mathop{max}\limits_{ p \in \textit{EP}(\alpha, Q)}{\textit{IEU}}(S, p, Q).\]
	The \textit{IEU} of $S$ in $q$-sequence database $D$ is defined as
	\[\textit{IEU}(S) = \sum_{S \sqsubseteq Q \land Q \in D}^{}\textit{IEU}(S, Q).\]
\end{definition}

For example, in Table \ref{table1}, consider the sequence $\alpha$ = $<${$a$}$>$, I-extension item $e$ and S-extension item $c$. Then, for the I-extension sequence $S$ = $<$$\{ae\}$$>$, we have \textit{IEU}($S, 2, S_{1}$) = 6 + 2 + 7 = 15, \textit{IEU}($S, 2, S_{2}$) = 3 + 2 + 0 = 5. Finally, \textit{IEU}($S$) = \textit{IEU}($S, S_{1}$) + \textit{IEU}($S, S_{2}$) = 15 + 5 = 20. 

For the S-extension sequence $S'$ = $<$$\{a\}, \{c\}$$>$, we have \textit{IEU}($S', 2, S_{1}$) = 6 + 6 + 1 = 13, \textit{IEU}($S', 1, S_{2}$) = 3 + 6 + 9 = 18, \textit{IEU}($S', 1, S_{5}$) = 6 + 9 + 7 = 22 and \textit{IEU}($S', 2, S_{5}$) = 3 + 3 + 4 = 10. The \textit{IEU} value of $S'$ in $S_{5}$ is \textit{IEU}($S', S_{5}$) = max\{22, 10\} = 22. Finally, \textit{IEU}($S'$) = \textit{IEU}($S', S_{1}$) + \textit{IEU}($S', S_{2}$) + \textit{IEU}($S', S_{5}$) = 13 + 18 + 22 = 53.

\begin{theorem}[overestimate property of \textit{IEU}]
	\label{OPIEU}
	Given a sequence $S$ and a $q$-sequence database $D$, we can obtain that $u(S)\leq IEU(S)$.
\end{theorem}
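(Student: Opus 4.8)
The plan is to prove the stronger \emph{local} statement $u(S,Q) \le IEU(S,Q)$ for every $q$-sequence $Q$ that contains $S$, and then sum over all such $Q$. Since $u(S) = \sum_{S \sqsubseteq Q \land Q \in D} u(S,Q)$ and $IEU(S) = \sum_{S \sqsubseteq Q \land Q \in D} IEU(S,Q)$ range over exactly the same collection of $q$-sequences, summing the local inequality immediately yields $u(S) \le IEU(S)$. Thus the whole argument reduces to a single fixed $Q$.

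Fix a $q$-sequence $Q$ containing $S$, and recall that $S$ is an extension sequence of some prefix $\alpha$ with extension item $i$. Because $u(S,Q)=\max_{p^\ast \in EP(S,Q)} u(S,p^\ast,Q)$, it suffices to bound each instance: I would show $u(S,p^\ast,Q) \le IEU(S,Q)$ for every $p^\ast \in EP(S,Q)$. The key is to match each ending position $p^\ast$ of $S$ with the ending position $p$ of $\alpha$ that produced it. For an I-extension ($S=\alpha \bigoplus i$) the item $i$ sits in the last itemset of the instance, so $p=p^\ast$ and $i \in Q^{p}$; for an S-extension ($S=\alpha \bigotimes i$) the item $i$ opens a fresh itemset, so $p=p^\ast-1$ and $i \in Q^{p+1}$. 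In both cases $p \in EP(\alpha,Q)$, so $IEU(S,p,Q)$ is one of the terms appearing in the maximum that defines $IEU(S,Q)$.

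Next I would decompose the instance utility using the utility-calculation definition. The instance of $S$ at $p^\ast$ uses exactly the itemsets of the instance of $\alpha$ (at $p$) together with the single extra item $i$, so $u(S,p^\ast,Q)=u(\alpha,p,Q)+u(i,p,Q)$ in the I-case and $u(S,p^\ast,Q)=u(\alpha,p,Q)+u(i,p+1,Q)$ in the S-case. Comparing with the definition of $IEU$, each right-hand side equals $IEU(S,p,Q)$ minus the remaining-utility term $ru(Q/_{(i,p)})$ (resp. $ru(Q/_{(i,p+1)})$). Because every internal and external utility is nonnegative, the remaining utility is nonnegative, whence $u(S,p^\ast,Q) \le IEU(S,p,Q) \le IEU(S,Q)$. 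Taking the maximum over $p^\ast \in EP(S,Q)$ gives $u(S,Q) \le IEU(S,Q)$, which completes the reduction.

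The part requiring the most care is the position bookkeeping: verifying the $p \leftrightarrow p^\ast$ correspondence separately for the two extension types, and confirming that any position in $EP(\alpha,Q)$ which does not admit the extension item contributes an $IEU$ value of $0$ and therefore never undercuts the maximum. The genuinely load-bearing fact, by contrast, is simply that remaining utility is nonnegative; everything else is a matter of unwinding the definitions of instance, extension, and $IEU$.
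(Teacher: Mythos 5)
Your proof is correct and takes essentially the same route as the paper's: decompose each instance utility of the extension sequence $S$ into the prefix utility plus the extension-item utility, observe that $IEU$ differs only by the nonnegative remaining-utility term, and then sum the per-sequence inequality $u(S,Q)\leq IEU(S,Q)$ over all $Q$ containing $S$. The only difference is one of care, in your favor: you make the correspondence between $EP(S,Q)$ and $EP(\alpha,Q)$ explicit (including the $p = p^\ast$ versus $p = p^\ast - 1$ shift), whereas the paper glosses over this by writing the maximum directly over $EP(\alpha,Q)$.
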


\begin{proof} \label{proof_OPIEU}
	It is assumed that $S$ is the extension sequence of $\alpha$ where the extension item is $i$. 
	
	i) For I-extension, we have
	\begin{align*}
	\textit{u}(S, Q) &= \mathop{max}\limits_{ p \in \textit{EP}(\alpha, Q)}{\textit{u}}(S, p, Q) \\
	& = \mathop{max}\limits_{ p \in \textit{EP}(\alpha, Q)}{\textit{u}}(\alpha, p, Q) + {\textit{u}}(i, p, Q) \\
	&\leq \mathop{max}\limits_{ p \in \textit{EP}(\alpha, Q)}{\textit{u}}(\alpha, p, Q) + {\textit{u}}(i, p, Q) + {\textit{ru}}(Q/_{(i, p)}) \\
	& = \textit{IEU}(S, Q). 
	\end{align*}
	
	ii) For S-extension, we have
	\begin{align*}
	\textit{u}(S, Q) &= \mathop{max}\limits_{ p \in \textit{EP}(\alpha, Q)}{\textit{u}}(S, p+1, Q) \\
	& = \mathop{max}\limits_{ p \in \textit{EP}(\alpha, Q)}{\textit{u}}(\alpha, p, Q) + {\textit{u}}(i, p+1, Q) \\
	&\leq \mathop{max}\limits_{ p \in \textit{EP}(\alpha, Q)}{\textit{u}}(\alpha, p, Q)+{\textit{u}}(i, p+1, Q) + {\textit{ru}}(Q/_{(i, p+1)}) \\
	& = \textit{IEU}(S, Q). 
	\end{align*}
	
	iii) For both I-extension and S-extension, \textit{u}($S$) = $\sum_{S \sqsubseteq Q \land Q \in D}^{}\textit{u}$($S, Q$) $\leq \sum_{S \sqsubseteq Q\land Q \in D}^{}$\textit{IEU}($S, Q$) = \textit{IEU}($S$) can be obtained.
\end{proof}

\begin{theorem}[downward closure property of \textit{IEU}]
	\label{DCPIEU}
	Given two sequences $S$ and $S'$, and a $q$-sequence database $D$, if $S'$ is the extension sequence of $S$, then $\textit{IEU}(S')\leq \textit{IEU}(S)$ can be obtained.
\end{theorem}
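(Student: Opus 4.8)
The plan is to reduce the global statement to a per-$q$-sequence comparison and then aggregate. Concretely, I would first establish that for every $q$-sequence $Q$ containing $S'$ we have $\textit{IEU}(S', Q) \leq \textit{IEU}(S, Q)$, and then conclude
\[\textit{IEU}(S') = \sum_{S' \sqsubseteq Q \land Q \in D} \textit{IEU}(S', Q) \leq \sum_{S' \sqsubseteq Q \land Q \in D} \textit{IEU}(S, Q) \leq \sum_{S \sqsubseteq Q \land Q \in D} \textit{IEU}(S, Q) = \textit{IEU}(S).\]
The last inequality is justified because $S \subseteq S'$ forces $\{Q : S' \sqsubseteq Q\} \subseteq \{Q : S \sqsubseteq Q\}$ (a $q$-sequence containing $S'$ also contains the contiguous sub-sequence $S$, by transitivity of the containment relation), while every summand $\textit{IEU}(S, Q)$ is nonnegative, so enlarging the index set cannot decrease the sum.

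The key preliminary step I would carry out is to rewrite $\textit{IEU}$ in a form that erases the distinction between I- and S-extension. Writing $i$ for the last item of $S$, I would observe that in both clauses of the definition the first two summands collapse to the utility of the instance of $S$: for an I-extension $u(\alpha, p, Q) + u(i, p, Q) = u(S, p, Q)$ with the instance of $S$ ending at position $p$, and for an S-extension $u(\alpha, p, Q) + u(i, p+1, Q) = u(S, p+1, Q)$ with the instance of $S$ ending at position $p+1$. Hence, letting $p_S$ denote the ending position of the instance of $S$ (so that $i$ sits in the $p_S$-th itemset and $p_S$ ranges over $\textit{EP}(S, Q)$ as the extension item varies over valid positions), the definition becomes
\[\textit{IEU}(S, Q) = \max_{p_S \in \textit{EP}(S, Q)} \bigl( u(S, p_S, Q) + \textit{ru}(Q/_{(i, p_S)}) \bigr),\]
and applying the same rewriting to $S'$ with last item $j$ gives $\textit{IEU}(S', Q) = \max_{p_{S'} \in \textit{EP}(S', Q)} \bigl( u(S', p_{S'}, Q) + \textit{ru}(Q/_{(j, p_{S'})}) \bigr)$.

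Next I would fix the ending position $p_{S'}$ attaining the maximum for $\textit{IEU}(S', Q)$ and let $p_S$ be the ending position of the underlying instance of $S$ ($p_S = p_{S'}$ if $S' = S \bigoplus j$, and $p_S = p_{S'} - 1$ if $S' = S \bigotimes j$); in either case $p_S \in \textit{EP}(S, Q)$ and $u(S', p_{S'}, Q) = u(S, p_S, Q) + u(j, p_{S'}, Q)$. Since $\textit{IEU}(S, Q)$ dominates its $p_S$-th term, after cancelling the common $u(S, p_S, Q)$ the per-sequence inequality reduces to
\[u(j, p_{S'}, Q) + \textit{ru}(Q/_{(j, p_{S'})}) \leq \textit{ru}(Q/_{(i, p_S)}).\]
The hard part will be precisely this last inequality, which is where the contiguity structure must be read off carefully. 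The point I would drive home is that the extension item $j$ always lies strictly after $i$ inside the instance: for an I-extension $j$ occurs in the same itemset $p_S$ but is lexicographically larger than $i$, and for an S-extension $j$ occurs in the later itemset $p_S + 1$. In both cases $j$ itself belongs to $Q/_{(i, p_S)}$, and the whole remaining sequence $Q/_{(j, p_{S'})}$ is a suffix of $Q/_{(i, p_S)}$ not containing $j$. Thus the $q$-item $j$ and the items of $Q/_{(j, p_{S'})}$ are disjoint items all contained in $Q/_{(i, p_S)}$, and because every item utility is nonnegative, the left-hand side is at most the total utility $\textit{ru}(Q/_{(i, p_S)})$.

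In summary, the main obstacle is pinning down the nesting of the positions of $i$ and $j$ inside $Q$ for both extension types and invoking nonnegativity of utilities to bound the shrunk remaining sequence. Once the clean reformulation of $\textit{IEU}$ is in hand and this nesting inequality is verified, the per-sequence bound $\textit{IEU}(S', Q) \leq \textit{IEU}(S, Q)$ holds, and the summation step at the top completes the proof.
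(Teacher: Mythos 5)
Your proposal is correct and follows essentially the same route as the paper's proof: a per-$q$-sequence bound $\textit{IEU}(S', Q) \leq \textit{IEU}(S, Q)$, justified by the observation that the new extension item lies after the previous one in $Q$, so its utility plus its remaining utility is absorbed into $\textit{ru}(Q/_{(i,p)})$, followed by aggregation over the smaller set of $q$-sequences containing $S'$ together with non-negativity. Your unified rewriting of $\textit{IEU}$ over $\textit{EP}(S, Q)$ merely makes explicit the equality the paper uses in the middle of its chain of inequalities, collapsing its case analysis into one argument.
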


\begin{proof}  \label{proof_DCPIEU}
	It is assumed that $S'$ is the extension sequence of $S$, where the extension item is $i'$, and $S$ is the extension sequence of $\alpha$, where the extension item is $i$. According to Definition \ref{extension item}, $i'$ is arranged after $i$ in any $q$-sequence $Q$ that contains both $S$ and $S'$. 
	
	i) First, if $S$ = $\alpha \bigoplus i$, for I-extension (i.e., $S’$ = $S \bigoplus i’$), we have
	\begin{align*}
	\textit{IEU}(S', Q) &= \mathop{max}\limits_{p' \in \textit{EP}(S, Q)}{\textit{u}}(S, p', Q) + {\textit{u}}(i', p', Q) + {\textit{ru}}(Q/_{(i', p')})\\
	&\leq \mathop{max}\limits_{ p' \in \textit{EP}(S, Q)}{\textit{u}}(S, p', Q) + {\textit{ru}}(Q/_{(i, p')}) \\
	& = \mathop{max}\limits_{ p \in \textit{EP}(\alpha, Q)}{\textit{u}}(\alpha, p, Q) + {\textit{u}}(i, p, Q) + {\textit{ru}}(Q/_{(i, p)}) \\
	& = \textit{IEU}(S, Q). 
	\end{align*}
	
	ii) For S-extension (i.e., $S'$ = $S \bigotimes i'$), we have
	\begin{align*}
	\textit{IEU}(S', Q) &= \mathop{max}\limits_{p' \in \textit{EP}(S, Q)}{\textit{u}}(S, p', Q) + {\textit{u}}(i', p'+1, Q) \\
	& \qquad  \qquad \quad \  + {\textit{ru}}(Q/_{(i', p'+1)})\\
	&\leq \mathop{max}\limits_{ p' \in \textit{EP}(S, Q)}{\textit{u}}(S, p', Q) + {\textit{ru}}(Q/_{(i, p')}) \\
	& = \mathop{max}\limits_{ p \in \textit{EP}(\alpha, Q)}{\textit{u}}(\alpha, p, Q) + {\textit{u}}(i, p, Q) + {\textit{ru}}(Q/_{(i, p)}) \\
	& = \textit{IEU}(S, Q). 
	\end{align*}
	
	Similarly, it can be proved that $\textit{IEU}$($S', Q$) $\leq$ \textit{IEU}($S, Q$) if $S$ = $\alpha \bigotimes i$. Because $S'$ is the extension sequence of $S$, we have $\{Q|S' \sqsubseteq Q \land Q \in D\} \subseteq \{Q|S \sqsubseteq Q\land Q \in D\}$. Therefore, \textit{IEU}($S'$) = $\sum_{S' \sqsubseteq Q \land Q \in D}^{}\textit{IEU}$($S', Q$) $\leq \sum_{S' \sqsubseteq Q\land Q \in D}^{}\textit{IEU}$($S, Q$) $\leq \sum_{S \sqsubseteq Q\land Q \in D}^{}$\textit{IEU}($S, Q$) = \textit{IEU}($S$).
\end{proof}

\begin{strategy}[LUIP strategy]
	Let $S$ be a candidate pattern; if $\textit{IEU}$($S$) $< \xi $ $\times$  $u$($D$), then $S$ and all its extension sequences can be pruned from the search space. That is, assume that $S$ is the extension sequence of sequence $\alpha$, where the extension item is $i$, if $\textit{IEU}$($S$) $< \xi$ $\times$ $u$($D$), then $i$ is a local unpromising item for $\alpha$ and can be pruned.
\end{strategy} 

\subsection{Proposed FUCPM Algorithm}
Based on the aforementioned data structures and pruning strategies, the proposed FUCPM algorithm is described as follows. Algorithm \ref{alg:FUCPM} shows the pseudocode of the main procedure of FUCPM, which takes a $q$-sequence database $D$, an external utility table \textit{EUT}, and a minimum utility threshold $\xi$ as the inputs. FUCPM first scans $D$ to calculate the utility of each $q$-sequence in $D$ and the utility of $D$. Further, FUCPM follows the GUIP strategy to delete global unpromising items and obtain the revised database $D'$ (line 1). Then, FUCPM scan $D'$ to construct the SIL of each $q$-sequence and IChain of each 1-sequence (line 2). The utility of each 1-sequence is obtained from its IChain, and FUCPM determines whether the 1-sequence is a HUCSP (lines 4--6). Subsequently, FUCPM calls the Recursive-Search procedure to discover longer HUCSPs recursively (line 7). Finally, the algorithm returns the complete set of HUCSPs (line 9).

\begin{algorithm}[ht]
	\small
	\caption{FUCPM algorithm}
	\label{alg:FUCPM}
	\begin{algorithmic}[1]
		\REQUIRE 
		$D$: a $q$-sequence database; \textit{EUT}: external utility table; $\xi$: minimum utility threshold.
		\ENSURE 
		\textit{HUCSPs}: the set of high-utility contiguous sequential patterns.
		
		\STATE scan $D$ to: \\
		(1) calculate $u$($Q$) for each $Q \in D$ and calculate $u$($D$); \\
		(2) calculate \textit{SWU}($S$) for each $S \in $ 1-sequences recurrently and obtain the revised database $D'$ by deleting the unique item contained in $S$ such that \textit{SWU}($S$) $< \xi$ $\times$ $u$($D$); $//$ \textit{The GUIP strategy}\\
		\STATE scan $D'$ to: \\
		(1) construct the SIL of each $q$-sequence; \\
		(2) construct the IChain of each 1-sequence;
		
		\FOR {each $S \in $ 1-sequences}
		\IF{$u$($S$) $\ge \xi $ $\times $ $u$($D$)}
		\STATE \textit{HUCSPs} $\leftarrow $ \textit{HUCSPs} $\cup$ $S$;
		\ENDIF
		\STATE call \textit{\textbf{Recursive-Search($S$, $SIL$, $S$.\textit{IChain})}};
		\ENDFOR		
		\RETURN \textit{HUCSPs}
	\end{algorithmic}	
\end{algorithm}

\begin{algorithm}[ht]
	\small
	\caption{Recursive-Search}
	\label{alg:RS}
	\begin{algorithmic}[1]
		\REQUIRE 
		$S$: a sequence as the prefix; $SIL$: the SIL of all $q$-sequences; $S$.\textit{IChain}: the IChain of $S$.
		\ENSURE 
		\textit{HUCSPs}.
		
		\FOR {each instance list $il \in$ $S$.\textit{IChain}}
		\STATE obtain SIL of the $q$-sequence whose SID is \textit{il.SID};
		\STATE scan SIL to obtain the set of I-extension items \textit{Iitem}($S$);
		\STATE scan SIL to obtain the set of S-extension items \textit{Sitem}($S$);
		\ENDFOR
		
		\FOR {each item $i \in$ \textit{Iitem}($S$) }
		\STATE $S'$ $\leftarrow$ $<$$S \oplus i$$>$; \\
		$//$ \textit{The LUIP strategy}
		\IF{\textit{IEU}($S'$) $< \xi$ $\times$ $u$($D$)}
		\STATE continue;
		\ENDIF
		\STATE construct the IChain of $S'$; 
		\IF{$u$($S'$) $\ge \xi $ $\times$ $u$($D$)}
		\STATE \textit{HUCSPs} $\leftarrow $ \textit{HUCSPs} $\cup$ $S'$;
		\ENDIF
		\STATE call \textit{\textbf{Recursive-Search($S'$, $SIL$,  $S'$.\textit{IChain})}};
		\ENDFOR
		
		\FOR {each item $i \in$ \textit{Sitem}($S$) }
		\STATE $S'$ $\leftarrow$ $<$$S \otimes i$$>$; \\
		$//$ \textit{The LUIP strategy}
		\IF{\textit{IEU}($S'$) $< \xi $ $\times$ $u$($D$)}
		\STATE continue;
		\ENDIF
		\STATE construct the IChain of $S'$; 
		\IF{$u$($S'$) $\ge \xi $ $\times$ $u$($D$)}
		\STATE  \textit{HUCSPs} $ \leftarrow $ \textit{HUCSPs} $ \cup$ $S'$;
		\ENDIF
		\STATE call \textit{\textbf{Recursive-Search($S'$, $SIL$,  $S'$.\textit{IChain})}};
		\ENDFOR
	\end{algorithmic}	
\end{algorithm}

Algorithm \ref{alg:RS} presents the details of the Recursive-Search procedure that recursively excavates HUCSPs in a depth-first search manner. It takes a prefix sequence $S$, the SIL of all $q$-sequences, and the IChain of $S$ as the inputs. First, the procedure scans the IChain of $S$ and SIL to obtain the sets of I-extension and S-extension items of $S$ in $D$ (i.e., \textit{Iitem}($S$) and \textit{Sitem}($S$) (lines 1--5)). Next, the procedure processes each item $i$ in \textit{Iitem}($S$) (lines 6--16). The extension sequence of $S$ is generated by performing I-extension on $S$ with $i$ (line 7). Next, the \textit{IEU} value of $S'$ is calculated. Note that \textit{IEU}($S'$) has three components: (1) utility of $S$, (2) utility of $i$, and (3) utility of the remaining sequence with respect to $i$. \textit{IEU}($S'$) is easy to calculate because (1) is stored in the IChain of $S$, and (2) and (3) can be obtained directly from the SIL. If \textit{IEU}($S'$) $< \xi$ $\times$ $u$($D$), then $i$ is a local unpromising item for $S$ and is pruned (lines 8--9); otherwise, the IChain of $S'$ is constructed based on the IChain of $S$ (line 11). Then,  $S'$ is evaluated to determine whether it is a HUCSP (lines 12--14). Note that $u$($S'$) can be obtained from the IChain of $S'$. Finally, the procedure invokes itself recursively to generate and examine new candidate patterns with $S'$ as their prefix (line 15). The items in \textit{Sitem} can be processed using a similar procedure (lines 17--27).

\section{Experiments}   \label{sec:experiments}

This section presents the experimental results of the proposed FUCPM algorithm and its competitor, HUCP-Miner \cite{huang2016efficient}. We conducted substantial experiments on several real-world and synthetic datasets for the following purposes: (1) compare the efficiency of FUCPM and HUCP-Miner, (2) evaluate the effectiveness of the proposed pruning strategies, (3) verify the scalability of FUCPM on large-scale multi-items-based sequence datasets, and (4) compare the efficiency and mining results of FUCPM and the state-of-the-art algorithm ProUM \cite{gan2020proum} for general HUSPM. 

The experiments were performed on a personal computer equipped with an Intel(R) Core(TM) i7-8700 CPU @3.20 GHz processor and 16 GB of RAM, running a 64-bit Microsoft Windows 10 operating system. All algorithms were implemented in Java using IntelliJ IDEA. 

The remainder of this section is organized as follows. Subsection \ref{sec:Data} introduces the datasets used in the experiments. Subsections \ref{sec:Runtime}, \ref{sec:Memory}, and \ref{sec:Candidate} compare the performance of FUCPM and HUCP-Miner in terms of runtime, memory consumption, and candidate generation. Because HUCP-Miner can only handle single-item-based sequences, this group of experiments were conducted on six real-world datasets that are composed of single-item-based sequences. Subsection \ref{sec:Effectiveness} compares the performance of FUCPM and its variants to evaluate the effectiveness of the pruning strategies. Subsection \ref{sec:Scalability} reports the scalability test of FUCPM on six synthetic datasets. Finally, subsection \ref{sec:Comparison} discusses the superiority of FUCPM compared with the general HUSPM method.

\begin{table}[h]
	\caption{Features of the datasets}
	\label{features}
	\centering      
	\renewcommand{\arraystretch}{1.2}
	\begin{tabular}{|c|c|c|c|c|c|}
		\hline
		\textbf{Dataset} & \textbf{$\textit{\#Seq}$} & \textbf{$\textit{\#Item}$} & \textbf{$\textit{maxLen}$} & \textbf{$\textit{avgLen}$}  &  \textbf{$\textit{avgItem}$} \\ \hline 
		\textit{Bible} & 36,369 & 13,905 & 100 & 21.64 & 1.00 \\ \hline
		\textit{Leviathan} & 5,834 & 9,025 & 100 & 33.81 & 1.00 \\ \hline
		\textit{BMS} & 77,512 & 3,340 & 267 & 4.62 & 1.00 \\ \hline
		\textit{MSNBC} & 31,790 & 17 & 100 & 13.33 & 1.00 \\ \hline 
		\textit{Kosarak10k} & 10,000 & 10,094 & 608 & 8.14 & 1.00 \\ \hline
		\textit{FIFA} & 20,450 & 2,990 & 100 & 34.74 & 1.00 \\ \hline
		\textit{Syn10k} & 10,000 & 7,312 & 18 & 6.22 & 4.35 \\ \hline
		\textit{Syn80k} & 79,718 & 7,584 & 18 & 6.19 & 4.32 \\ \hline
		\textit{Syn160k} & 159,501 & 7,609 & 20 & 6.19 & 4.32 \\ \hline
		\textit{Syn240k} & 239,211 & 7,617 & 20 & 6.19 & 4.32 \\ \hline
		\textit{Syn320k} & 318,889 & 7,620 & 20 & 6.19 & 4.32 \\ \hline
		\textit{Syn400k} & 398,716 & 7,621 & 20 & 6.18 & 4.32 \\ \hline
	\end{tabular}
\end{table}

\subsection{Dataset Description} \label{sec:Data}

In the experiments, six real-world datasets and six synthetic datasets were used to evaluate the performance of the algorithms. The detailed characteristics of the datasets are listed in Table \ref{features}. Note that \#\textit{Seq} is the number of sequences; \#\textit{Item} is the number of distinct items; \textit{maxLen} and \textit{avgLen} are the maximum and average length of the sequences, respectively; and \textit{avgItem} is the average number of items contained in each itemset. It can be observed that the \textit{avgItem} of the six real-world datasets are all equal to 1 because these datasets are composed of single-item-based sequences. Conversely, the synthetic datasets are composed of multi-items-based sequences, where each itemset contains more than four items on average. The sources of the datasets are as follows. (1) \textit{Bible} and \textit{Leviathan} are conversions of the Bible and novel Leviathan. Each word in the books is transformed into an item, and each sentence is treated as a sequence. (2) \textit{BMS}, \textit{MSNBC}, \textit{Kosarak10k}, and \textit{FIFA} are clickstream data from an e-commerce website, the MSNBC website, the website of FIFA World Cup 98, and a Hungarian news portal, respectively. (3) The synthetic datasets (from \textit{Syn10k} to \textit{Syn400k}) are generated by the IBM Quest Dataset Generator \cite{IBM1994}. All these datasets were obtained from the open-source data mining library SPMF \footnote{http://www.philippe-fournier-viger.com/spmf/}.

\begin{figure*}[htbp]
	\centering
	\includegraphics[width=0.9\linewidth]{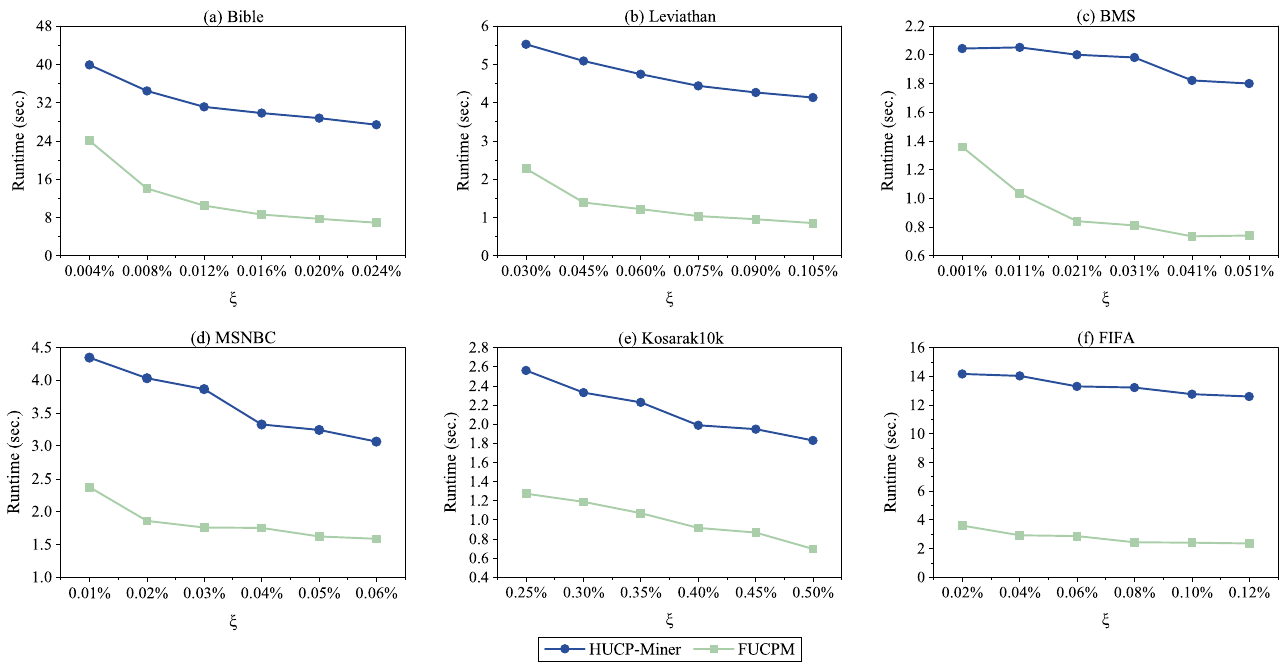}
	\caption{Runtime of HUCP-Miner and FUCPM}
	\label{runtime}
\end{figure*}

\begin{figure*}[htbp]
	\centering
	\includegraphics[width=0.9\linewidth]{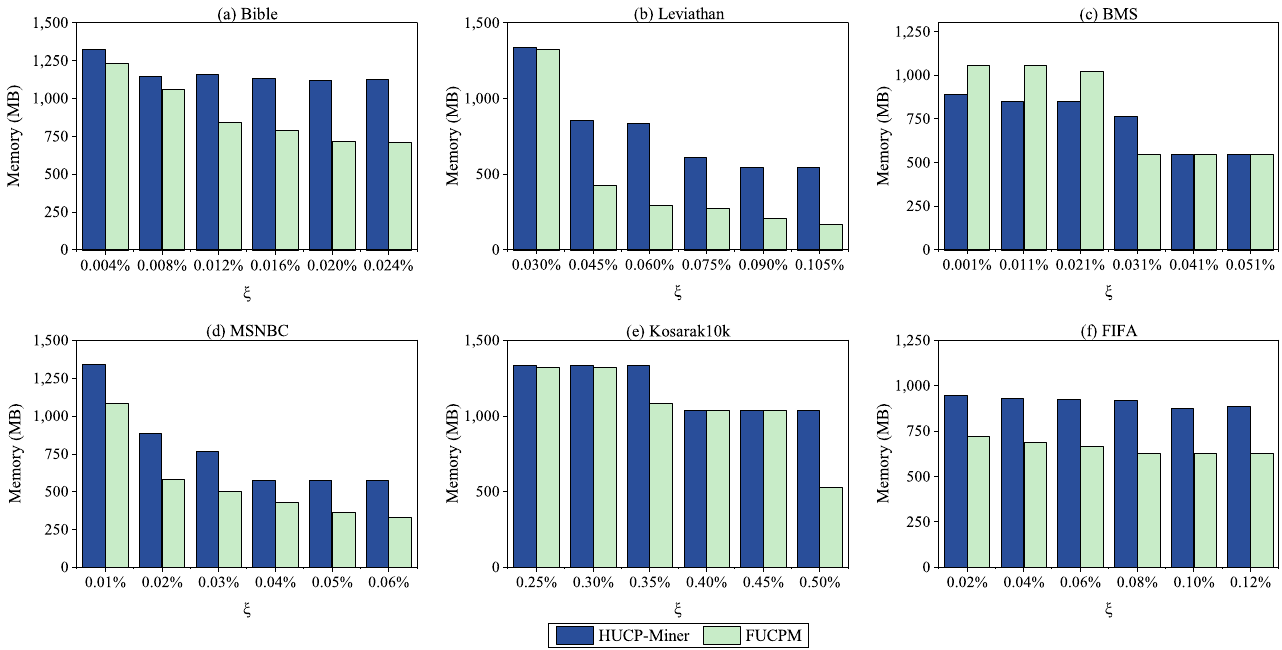}
	\caption{Memory consumption of HUCP-Miner and FUCPM}
	\label{memory}
\end{figure*}

\subsection{Runtime Analysis} \label{sec:Runtime}

Figure \ref{runtime} shows the runtime of the compared algorithms under various threshold values on the six real-world datasets. Noticeably, FUCPM has better performance than HUCP-Miner for all datasets, with an improvement of about 30\% to 75\% in terms of running speed. The advantage of FUCPM is clearer on \textit{Bible} and \textit{FIFA} datasets, wherein the average length of $q$-sequences is relatively longer. The reason for this is probably that the proposed GUIP and LUIP strategies can prune unpromising items earlier to prevent the candidate sequences from growing too long. Further, we can observe that the runtime decreases as the threshold value increases. This is because, under larger thresholds, more candidate sequences can be pruned earlier given that their upper bound value does not exceed the thresholds. Meanwhile, the runtime of FUCPM reduces more sharply than that of HUCP-Miner when the threshold value increases, especially on \textit{Bible} and \textit{BMS} datasets. In general, the prominent advantage in terms of runtime illustrates that FUCPM benefits from the proposed pruning strategies and is good at addressing the datasets that contain long sequences.

\begin{figure*}[htbp]
	\centering
	\includegraphics[width=0.9\linewidth]{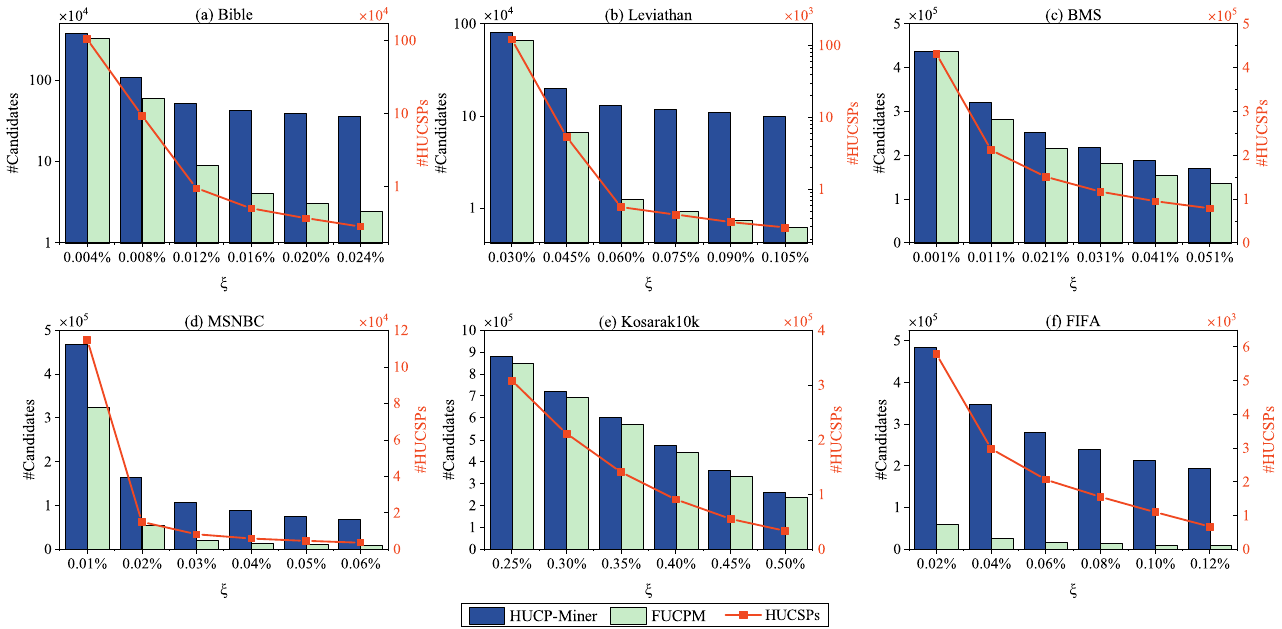}
	\caption{Candidates and discovered patterns of HUCP-Miner and FUCPM}
	\label{candidate}
\end{figure*}

\subsection{Memory Analysis} \label{sec:Memory}

The memory usage of FUCPM and HUCP-Miner is displayed in Figure \ref{memory}. It can be clearly observed that, as the threshold value increases, the memory consumption of both algorithms shows a decreasing trend. This is because the memory consumed on the instance chains of the candidates reduces when the threshold increases. We can also observe that FUCPM consumes less memory than HUCP-Miner in most cases. For the \textit{Bible} dataset, the memory consumption of HUCP-Miner is almost steady as the threshold increases, while the memory consumed by FUCPM gradually decreases. The performance gap is clearer for \textit{Leviathan}; FUCPM consumes approximately 300--500 MB less memory than HUCP-Miner under various threshold settings. However, little difference was observed in memory consumption between the two algorithms when $\xi$ = 0.030\%. A similar case also occurs for \textit{BMS} when $\xi$ = 0.041\% and 0.051\%, as well as on \textit{Kosarak10k} when $\xi$ = 0.25\%, 0.30\%, 0.40\%, and 0.45\%. This is mainly because the number of candidates generated by the two algorithms is close under such circumstances. Another interesting phenomenon is that FUCPM uses more memory than HUCP-Miner when the threshold is relatively low for \textit{BMS}. We speculate that this is because FUCPM requires more memory to store the SIL structure of a large number of $q$-sequences in \textit{BMS}, although it reduces the memory consumed on the instance chains of fewer candidates. For \textit{MSNBC} and \textit{FIFA}, FUCPM outperformed HUCP-Miner under all parameter settings. In summary, by adopting efficient pruning strategies and compact data structures, FUCPM can achieve better performance in terms of memory consumption than HUCP-Miner in most cases. 

\begin{table*}[htbp]
	\centering
	\renewcommand{\arraystretch}{1.2}
	\caption{Effective search rate of HUCP-Miner and FUCPM}
	\label{ESR}
	\begin{tabular}{|c|c|c|c|c|c|c|c|}
		\hline 
		\multirow{3}{*}{\textit{Bible}} 
		& {$\xi$} & {0.004\%} & {0.008\%} & {0.012\%} & {0.016\%} & {0.020\%} & {0.024\%} \\
		\cline{2-8}
		& {HUCP-Miner (\%)} & {27.86} & {8.67} & {1.81} & {1.17} & {0.94} & {0.79} \\
		\cline{2-8}
		& {FUCPM (\%)} & {\textbf{32.24}} & {\textbf{15.81}} & {\textbf{10.72}} & {\textbf{12.28}} &  {\textbf{12.02}} & {\textbf{11.67}} \\
		\cline{2-8}
		\hline
		
		\multirow{3}{*}{\textit{Leviathan}}
		& {$\xi$} & {0.030\%} & {0.045\%} & {0.060\%} & {0.075\%} & {0.090\%} & {0.105\%} \\
		\cline{2-8}
		& {HUCP-Miner (\%)} & {14.98} & {2.68} & {4.27} & {3.74} & {3.23} & {2.92} \\
		\cline{2-8}
		& {FUCPM (\%)} & {\textbf{18.17}} & {\textbf{8.04}} & {\textbf{4.51}} & {\textbf{4.75}} &  {\textbf{4.74}} & {\textbf{4.72}} \\
		\cline{2-8}
		\hline
		
		\multirow{3}{*}{\textit{BMS}} 
		& {$\xi$} & {0.001\%} & {0.011\%} & {0.021\%} & {0.031\%} & {0.041\%} & {0.051\%} \\
		\cline{2-8}
		& {HUCP-Miner (\%)} & {98.42} & {65.86} & {59.40} & {53.90} & {50.29} & {46.74} \\
		\cline{2-8}
		& {FUCPM (\%)} & {\textbf{98.93}} & {\textbf{74.94}} & {\textbf{69.92}}& {\textbf{64.65}} &  {\textbf{61.46}} & {\textbf{58.04}} \\
		\cline{2-8}
		\hline
		
		\multirow{3}{*}{\textit{MSNBC}} 
		& {$\xi$} & {0.01\%} & {0.02\%} & {0.03\%} & {0.04\%} & {0.05\%} & {0.06\%} \\
		\cline{2-8}
		& {HUCP-Miner (\%)} & {24.57} & {9.17} & {7.56} & {6.63} & {5.91} & {5.50} \\
		\cline{2-8}
		& {FUCPM (\%)} & {\textbf{35.30}} & {\textbf{27.00}} & {\textbf{41.94}} & {\textbf{42.30}} &  {\textbf{41.19}} & {\textbf{41.36}} \\
		\cline{2-8}
		\hline
		
		\multirow{3}{*}{\textit{Kosarak10k}} 
		& {$\xi$} & {0.25\%} & {0.30\%} & {0.35\%} & {0.40\%} & {0.45\%} & {0.50\%} \\
		\cline{2-8}
		& {HUCP-Miner (\%)} & {34.84} & {29.21} & {23.42} & {19.09} & {15.42} & {12.53} \\
		\cline{2-8}
		& {FUCPM (\%)} & {\textbf{36.18}} & {\textbf{30.51}} & {\textbf{24.65}} & {\textbf{20.32}} &  {\textbf{16.68}} & {\textbf{13.91}} \\
		\cline{2-8}
		\hline
		
		\multirow{3}{*}{\textit{FIFA}} 
		& {$\xi$} & {0.02\%} & {0.04\%} & {0.06\%} & {0.08\%} & {0.10\%} & {0.12\%} \\
		\cline{2-8}
		& {HUCP-Miner (\%)} & {1.20} & {0.86} & {0.74} & {0.65} & {0.52} & {0.35} \\
		\cline{2-8}
		& {FUCPM (\%)} & {\textbf{9.79}} & {\textbf{10.97}} & {\textbf{11.76}} & {\textbf{11.96}} &  {\textbf{10.58}} & {\textbf{7.80}} \\
		\cline{2-8}
		\hline
	\end{tabular}
\end{table*}

\subsection{Candidate Analysis} \label{sec:Candidate} 

The number of candidate patterns generated by the algorithm is an important measure of the size of the search space, which reflects the ability of the pruning strategies. In Figure \ref{candidate}, histograms are used to represent the number of generated candidate patterns and broken lines are used to represent the number of discovered HUCSPs. As shown in Figure \ref{candidate}, FUCPM generates fewer candidates than HUCP-Miner on all datasets. In particular, for \textit{MSNBC} and \textit{FIFA}, the number of candidates generated by FUCPM is significantly less than that of HUCP-Miner, which can explain why FUCPM performs better in terms of runtime and memory consumption than HUCP-Miner on these two datasets. In addition, FUCPM is evidently dominant on \textit{Bible} and \textit{Leviathan} when the threshold is relatively high. However, for \textit{BMS} and \textit{Kosarak10k}, the number of candidates generated by FUCPM did not decrease significantly compared to HUCP-Miner. We infer that this is because the data volume of these two datasets, which is calculated as \#\textit{Seq} $\times$ \textit{avgLen}, is relatively small, so the advantage of FUCPM is not quite evident.

We further introduce a new metric called the effective search rate (\textit{ESR}), which is defined as (\#\textit{HUCSPs} $\div$ \#\textit{Candidates}) $\times$ 100\%, to evaluate the search efficiency of the algorithms. Clearly, the algorithm with a higher \textit{ESR} value can reduce the unnecessary search for low-utility candidates. The \textit{ESR} values of the compared algorithms are listed in Table \ref{ESR}. We can observe that the \textit{ESR} of FUCPM is approximately 10\% higher than that of HUCP-Miner under most threshold settings for \textit{Bible}, \textit{BMS}, and \textit{FIFA}. The advantage of FUCPM is more apparent on \textit{MSNBC}, whereas for \textit{Leviathan} and \textit{Kosarak10k}, the \textit{ESR} of FUCPM is slightly higher than that of HUCP-Miner. From these discussions, we can conclude that the proposed GUIP and LUIP strategies can prune the search space more effectively than the compared method.

\begin{figure*}[h]
	\centering
	\includegraphics[width=0.9\linewidth]{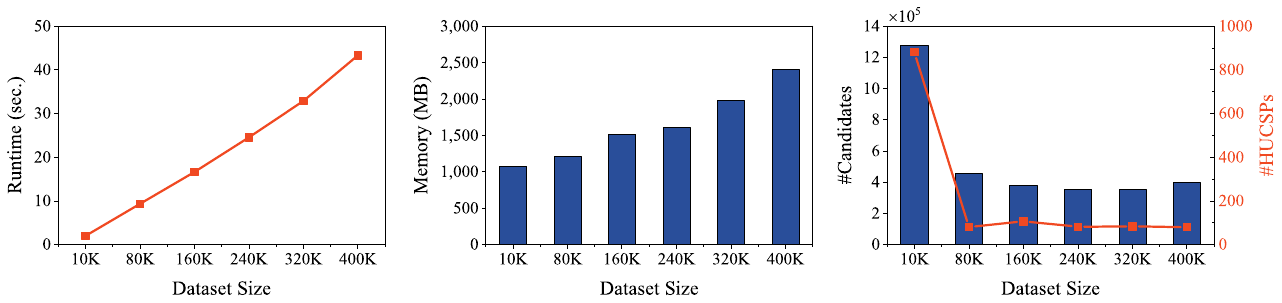}
	\caption{Scalability test of FUCPM}
	\label{scalability}
\end{figure*}

\subsection{Effectiveness of Pruning Strategies}
\label{sec:Effectiveness}

An ablation experiment was conducted on six real-world datasets to evaluate the effectiveness of the proposed pruning strategies. Figure \ref{effectiveness} shows the performance in terms of runtime and memory consumption of the FUCPM and its two variants. Note that FUCPM$_{\rm GUIP}$ and FUCPM$_{\rm LUIP}$ are the algorithms that remove GUIP and LUIP strategies from FUCPM, respectively. The minimum utility thresholds were set to 0.01\% for the six datasets. As shown in Figure \ref{effectiveness}, FUCPM is far more superior to FUCPM$_{\rm LUIP}$ in terms of runtime for all datasets. In particular, compared to FUCP$_{\rm LUIP}$, the advantage of FUCPM is much more apparent on \textit{Bible}, \textit{Leviathan}, and \textit{FIFA}, which occupy a larger \textit{avgLen}. This result demonstrates that the LUIP strategy is suitable for handling long sequences. From Figure \ref{effectiveness}, we can also observe that FUCPM uses the least memory for each dataset. However, the runtime and memory consumption of FUCPM and FUCPM$_{\rm GUIP}$ are almost equal for \textit{Leviathan}, \textit{BMS}, and \textit{MSNBC}. This is mainly because only few global unpromising items exist and can be pruned on these datasets. In conclusion, the results of the ablation experiment present the contributions of the proposed GUIP and LUIP strategies to improve the efficiency of FUCPM, and the LUIP strategy is more effective than the GUIP strategy for pruning the search space.

\begin{figure}[htbp]
	\centering
	\includegraphics[width=0.98\linewidth]{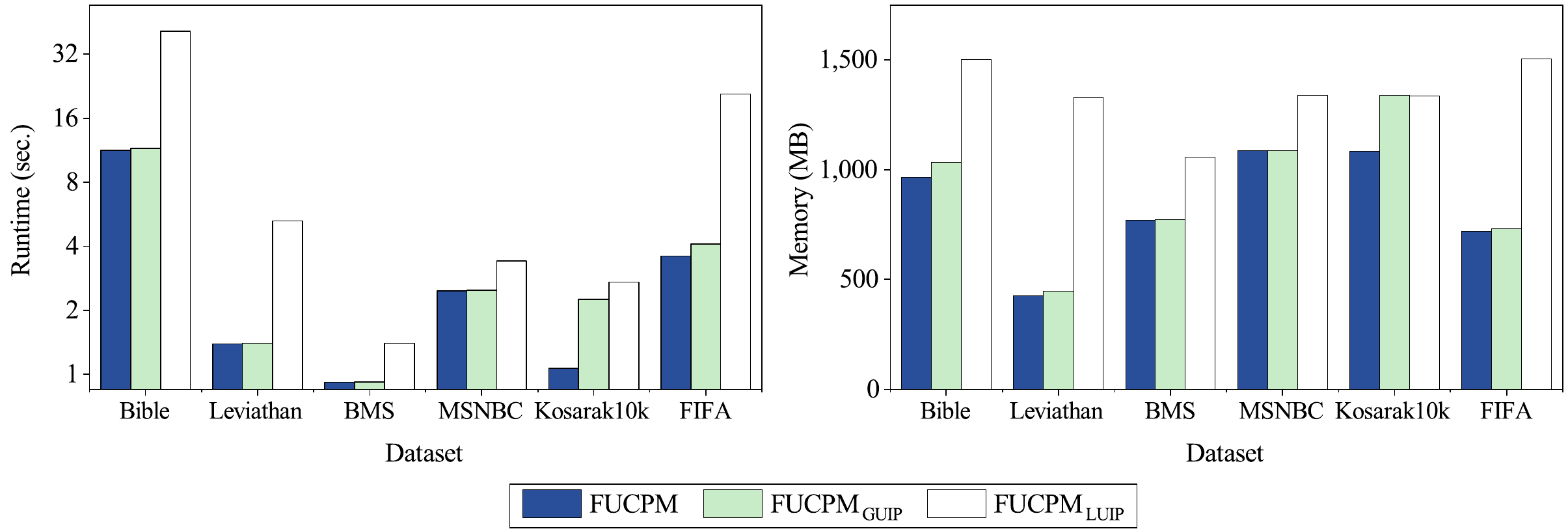}
	\caption{Effectiveness of pruning strategies}
	\label{effectiveness}
\end{figure}

\subsection{Scalability Test} \label{sec:Scalability}

The scalability of the proposed FUCPM algorithm was tested on six synthetic datasets containing multi-items-based sequences with different sizes varying from 10 K to 400 K. Figure \ref{scalability} displays the results in terms of runtime, memory consumption, and the number of candidates and HUCSPs when $\xi$ = 0.1\% for each test. Evidently, the runtime and memory consumption increase almost linearly as the dataset size increases. In addition, the number of generating candidates and HUCSPs does not increase when the dataset size becomes larger. This is because the threshold value (calculated as $\xi \times u(D)$) increases as the dataset size increases, although the utility values of some candidates may increase in a larger dataset. We can also observe that the number of candidates and HUCSPs on \textit{Syn10k} is greater. This phenomenon can be attributed to the distinctive utility distribution of this dataset. In summary, because the runtime and memory consumption are approximately linear with the dataset size, we can conclude that FUCPM is scalable to large-scale multi-items-based sequence datasets.

\subsection{Comparison with General HUSPM} \label{sec:Comparison}

To verify the efficiency of the algorithm and conciseness of the mining results, we conducted an experiment to compare FUCPM with the general HUSPM algorithm ProUM that discovers the complete set of HUSPs, including the non-contiguous ones, on six real-world datasets. The minimum utility thresholds for the six datasets were all set to 0.5\%. Table \ref{performance} presents the experimental results. We use "-" to mark the situation in which the algorithm cannot finish the mining process within 2 hours. As shown in Table \ref{performance}, the runtime and memory consumption of FUCPM are substantially less than those of ProUM. For \textit{BMS}, \textit{Kosarak10k}, and \textit{FIFA}, ProUM cannot obtain the results within 2 hours, whereas FUCPM finishes the mining process in less than 2 seconds. In addition, under the same threshold, the number of patterns discovered by FUCPM is far less than that of ProUM, which is beneficial for users who only want to obtain and analyze contiguous patterns.

\begin{table}[!ht]
	\caption{Performance of ProUM and FUCPM}
	\label{performance}
	\centering      
	\renewcommand{\arraystretch}{1.2}
	\begin{tabular}{|c|c|c|c|c|}
		\hline
		\textbf{Dataset} & \textbf{Algorithm} & \textbf{Runtime/s} & \textbf{Memory/MB} & \textbf{\#Patterns} \\ \hline 
		\multirow{2}{*}{\textit{Bible}} 
		& {ProUM} & {29.20} & {1352.71} & {2,760}  \\
		\cline{2-5}
		& {FUCPM} & {\textbf{2.12}} & {\textbf{417.35}} & {\textbf{36}} \\
		\cline{2-5}
		\hline
		\multirow{2}{*}{\textit{Leviathan}} 
		& {ProUM} & {21.87} & {1334.06} & {15,441}  \\
		\cline{2-5}
		& {FUCPM} & {\textbf{0.79}} & {\textbf{146.56}} & {\textbf{32}} \\
		\cline{2-5}
		\hline
		\multirow{2}{*}{\textit{BMS}} 
		& {ProUM} & {-} & {-} & {-}  \\
		\cline{2-5}
		& {FUCPM} & {\textbf{0.51}} & {\textbf{167.75}} & {\textbf{6,554}} \\
		\cline{2-5}
		\hline
		\multirow{2}{*}{\textit{MSNBC}} 
		& {ProUM} & {859.43} & {1345.54} & {395,626}  \\
		\cline{2-5}
		& {FUCPM} & {\textbf{1.36}} & {\textbf{323.18}} & {\textbf{290}} \\
		\cline{2-5}
		\hline
		\multirow{2}{*}{\textit{Kosarak10k}} 
		& {ProUM} & {-} & {-} & {-}  \\
		\cline{2-5}
		& {FUCPM} & {\textbf{0.69}} & {\textbf{531.71}} & {\textbf{32,976}} \\
		\cline{2-5}
		\hline
		\multirow{2}{*}{\textit{FIFA}} 
		& {ProUM} & {-} & {-} & {-}  \\
		\cline{2-5}
		& {FUCPM} & {\textbf{1.91}} & {\textbf{534.07}} & {\textbf{59}} \\
		\cline{2-5}
		\hline
	\end{tabular}
\end{table}

\section{Conclusion}   \label{sec:conclusion}

In this paper, we proposed an efficient and scalable algorithm called FUCPM to address the UCSPM problem. Specifically, two compact data structures(i.e., sequence information list and instance chain) were designed to facilitate the calculation of utility and upper bound values of the candidate patterns. To further improve the efficiency, we proposed the GUIP and LUIP strategies to prune the search space, which are based on the \textit{SWU} and the novel \textit{IEU} upper bounds, respectively. Extensive experimental results on both real-world and synthetic datasets show that FUCPM outperforms the state-of-the-art algorithm for CSPM and is scalable to large-scale and complex multi-items-based sequence datasets. Compared to conventional HUSPM, the efficiency of FUCPM is significantly improved, and the mining results are far more concise.

In the future, several interesting issues can be further researched, such as using FUCPM to discover on-shelf patterns \cite{zhang2021shelf}, and designing the distributed and parallel version of FUCPM to better handle big data \cite{zihayat2016distributed}. It is also interesting to explore the applications of UCSPM in various fields. For example, given that HUCSPs maintain the contiguous order of items in sequence data, it can be applied for next-items recommendation \cite{yap2012effective}. For some practical issues, such as text representation \cite{alias2018text} and biological sequence discovery \cite{stamoulakatou2018dla}, where the adjacent relationship of items is significant, UCSPM can also come in handy.

\ifCLASSOPTIONcompsoc
  \section*{Acknowledgments}
\else
  \section*{Acknowledgment}
\fi

This work was partially supported by National Natural Science Foundation of China (Grant No. 62002136), Guangzhou Basic and Applied Basic Research Foundation (Grant No. 202102020277), Natural Science Foundation of Guangdong Province, China (Grant No. 2020A1515010970), and Shenzhen Research Council (Grant Nos. JCYJ20200109113427092 and GJHZ20180928155209705).

\ifCLASSOPTIONcaptionsoff
  \newpage
\fi

\bibliographystyle{IEEEtran}
\bibliography{FUCPM.bib}



\end{document}